\documentclass[conference]{IEEEtran}

\emergencystretch=1.4em
\overfullrule=2mm

\bibliographystyle{IEEEtranS}
\usepackage[noadjust]{cite}

\usepackage{amsmath,amssymb,amsfonts,amsthm}

\usepackage{tikz}
\usepackage{graphicx}
\usepackage{mathtools}

\usepackage[ruled, linesnumbered]{algorithm2e}
\SetKwInput{KwData}{Input}
\SetKwInput{KwResult}{Output}

\usepackage[capitalize,noabbrev]{cleveref}

\newtheorem{theorem}{Theorem}
\newtheorem{lemma}[theorem]{Lemma}
\newtheorem{corollary}[theorem]{Corollary}

\theoremstyle{definition}
\newtheorem*{remark}{Remark}

\renewcommand\log{\ln}

\renewcommand{\epsilon}{\eps}

\newcommand\PSI{\vec\psi}

\renewcommand{\vec}[1]{\boldsymbol{#1}}

\newcommand\SIGMA{\vec\sigma}

\newcommand\G{\mathcal{G}}

\newcommand\cN{\mathcal{N}}

\newcommand\cP{\mathcal{P}}
\newcommand\cX{\mathcal{X}}

\newcommand\cW{\mathcal{W}}

\def\cR{{\mathcal R}}

\newcommand\eps{\varepsilon}

\newcommand\NN{\mathbb{N}}

\newcommand\Erw{\mathbb{E}\!}
\newcommand{\vecone}{\vec{1}}

\newcommand{\set}[1]{\left\{#1\right\}}

\newcommand{\Bin}{{\rm Bin}}
\newcommand{\Mult}{{\rm Mult}}
\newcommand{\Be}{{\rm Be}}

\newcommand\bc[1]{\left({#1}\right)}
\newcommand\cbc[1]{\left\{{#1}\right\}}

\newcommand\brk[1]{\left\lbrack{#1}\right\rbrack}

\newcommand\abs[1]{\left|{#1}\right|}

\newcommand\RR{\mathbb{R}}

\newcommand{\Erdos}{Erd\H{o}s}
\newcommand{\Renyi}{R\'enyi}

\newcommand\pr{\mathbb{P}} 
\renewcommand\Pr{\pr\!}

 \def\G{{\vec G}}

\def\pr{{\mathbb P}}

\newcommand{\remove}[1]{}

\newcommand{\cG}{\mathfrak{G}}

\usepackage{xspace}
\usepackage{relsize}

\def\CC{C\nolinebreak[4]\hspace{-.05em}\raisebox{.4ex}{\relsize{-2}{\textbf{++}}}}
\def\pqchannel{noisy channel\xspace}
\def\pqChannel{Noisy Channel\xspace}
\makeatletter
\def\?#1{}
\def\whp{w.h.p\@ifnextchar-{.}{\@ifnextchar.{.\?}{\@ifnextchar,{.}{\@ifnextchar){.}{\@ifnextchar:{.:\?}{.\ }}}}}}
\def\Whp{W.h.p\@ifnextchar.{.\?}{\@ifnextchar,{.}{.\ }}}
\makeatother

\def\paragraph#1{\subsubsection*{#1.\expandafter\?}}

\def\dense{\medmuskip=2.0mu plus 2.0mu minus 2.0mu
\thinmuskip=2.0mu
\thickmuskip=2.0mu plus 5.0mu}

\usepackage{balance}

\begin{document}

\title{Distributed Reconstruction of Noisy Pooled Data}

\author{\IEEEauthorblockN{Max Hahn-Klimroth}%
\IEEEauthorblockA{\textit{TU Dortmund University}\\
Dortmund, Germany \\
max.hahn-klimroth@cs.tu-dortmund.de}
\and
\IEEEauthorblockN{Dominik Kaaser}%
\IEEEauthorblockA{\textit{TU Hamburg}\\
Hamburg, Germany \\
dominik.kaaser@tuhh.de}
}

\maketitle
\thispagestyle{plain}
\pagestyle{plain}

\begin{abstract}
In the pooled data problem we are given a set of $n$ agents, each of which holds a hidden state bit, either $0$ or $1$.
A querying procedure returns for a query set the sum of the states of the queried agents.
The goal is to reconstruct the states using as few queries as possible.

In this paper we consider two noise models for the pooled data problem.
In the noisy channel model, the result for each agent flips with a certain probability.
In the noisy query model, each query result is subject to random Gaussian noise.

Our results are twofold.
First, we present and analyze for both error models a simple and efficient distributed algorithm that reconstructs the initial states in a greedy fashion.
Our novel analysis pins down the range of error probabilities and distributions for which our algorithm reconstructs the exact initial states with high probability.
Secondly, we present simulation results of our algorithm and compare its performance with approximate message passing (AMP) algorithms that are conjectured to be optimal in a number of related problems.
\end{abstract}
\begin{IEEEkeywords}
Reconstruction, Pooled Data, Random Noise, Greedy Algorithm, Approximate Message Passing
\end{IEEEkeywords}

\section{Introduction}
The distributed reconstruction problem of noisy pooled data is defined as follows.
We are given a set of $n$ agents $V = \set{x_1, \dots, x_n}$ connected via some communication network.
Each agent has a hidden state bit. 
We assume that $k$ agents have bit one and $n - k$ agents have bit zero.
The goal is to identify the agents with bit one. 
To this end, we can add \emph{query nodes} to the network.
Each query node measures a certain number of agents in parallel and returns the sum of the queried agents' states.
Our task is to design the \emph{query graph} $G$ that assigns agents to queries and a distributed algorithm that reconstructs the agents' bits using as few queries as possible.
In this setting we consider two noise models: the \pqchannel model and the noisy query model.

In the \pqchannel model, we assume that the data received by query nodes is subject to random bit flips: with probability $p$, a one bit is read as zero (\emph{false negative}), and with probability $q$ a zero bit is read as one (\emph{false positive}).
This model targets a \emph{technological setting}: query nodes can be envisioned as GPUs in a GPU cluster that evaluate a neural network \cite{liang2021neural, martins_2014, NIPS2014_fb8feff2}.
Hence, the \pqchannel model describes the possibility of random bit flips in a distributed machine learning environment.
Note that this model also includes the so-called Z-channel where $q = 0$ such that only $1 \rightarrow 0$ errors occur.
The Z-channel captures the fact that in applications $q$ is often significantly smaller than $p$ \cite{constantin1979theory,zhou2013nonuniform}.

In the noisy query model, the output of query nodes is subject to Gaussian noise.
This targets the setting of a \mbox{life-sciences} laboratory:
agents can be envisioned as samples in a medical laboratory.
The query nodes are automated pipetting machines that pool the samples together and run an automated bio-medical test.
The tests return, e.g., the total concentration of a certain substance in the pool.
Due to the pooling and the testing procedure the output of query nodes is subject to random noise.
(See \cite{kong2012automatic} for a survey on state of the art of automated life-sciences laboratories.)
Observe that in both the technological and the life-sciences setting the time to per\-form a single query dominates the time to compute the reconstruction.
We therefore focus on \emph{non-adaptive} schemes that carry out all queries in parallel.

In our analysis we distinguish between the \emph{linear} and the \emph{sublinear} regime. 
Let $k$ be the number of agents with bit $1$.
In the linear regime we have $k = \zeta n$ for some $\zeta \in (0,1)$.
In the sublinear regime we have $k = n^{\theta}$ for some $\theta \in (0,1)$.
Both regimes are of practical interest.
For example, according to Heaps' Law of Epidemiology \cite{benz_2008} the early spread of a pandemic can be modeled by the sublinear regime.
Similarly, in 2019 it was estimated that there are 105,200 people living with HIV in the UK, out of which an estimated 6\% are unaware of their infection status \cite{aids}.
This corresponds roughly to a value of $\theta = 0.1$.
On the other hand, various tasks in computational biology \cite{du2000combinatorial,cao_2014,sham_2002}, traffic monitoring \cite{wang_2015} or confidential data transfer \cite{adam_1989,dinur_2003} fall into the linear regime.

While the pooled data problem is studied quite frequently (e.g., \cite{AJS_book, alaoui_2017, gebhard_2022, grebinski_2000, hahnklimroth2021near}), the existing line of research mostly analyzes the idealized setting without noise. However, there is always a positive probability of misclassifications in neural networks, and bio-medical testing procedures are known to be prone to noise.
In this more realistic, noisy setting, only few contributions are known \cite{li2022combinatorial, scarlett_2017}.
They all study information-theoretic aspects and thus assume unlimited computational power. Our main motivation, therefore, is to bridge the gap between theoretical results and practical applications.
In particular, we analyze a distributed variant of a computationally efficient but sequential algorithm proposed by Gebhard et al.\ \cite{gebhard_2022}, which has been previously analyzed only in the idealized setting without noise. For our distributed variant we give precise performance guarantees that show that the algorithm works well under realistic noise models.
To the best of our knowledge our work is the first study of an efficient algorithm for the pooled data problem under noise.

\subsection{Our Contribution}
The contribution of this paper is twofold.
First, in \cref{sec_theoretic_proofs} we give rigorous and exact asymptotic bounds on how many queries are needed such that our algorithm reconstructs the hidden bits of all agents correctly \whp\footnote{We say that a sequence of properties $\cP_1, \ldots, \cP_n$ holds with high probability (\whp), if $\lim_{n \to \infty} \Pr \bc{ \cP_n } = 1$.}. 
The formal definition of the pooling model and the noise models can be found in \cref{sec:model}, and \cref{algorithm} is described in \cref{sec:algorithm}.
Secondly, in \cref{sec:simulations} we present extensive simulation results for realistic numbers of agents $n \in \bc{10^2 \ldots 10^5}$ that evaluate those asymptotic bounds.
Additionally, we compare our findings with the performance of the \emph{approximate message passing algorithm (AMP)}.
AMP is a sequential algorithm that is conjectured to be optimal in similar problems \cite{donoho_amp, donoho_amp2}.

For the following theorems we define $\gamma = 1 - \exp \bc {-1/2}$ and denote by Z the Z-channel and by GNC the general noisy channel. 

\begin{theorem}[Noisy Channel Model] \label{thm_pq}
Let $n$ be the number of agents and assume that $m$ queries are conducted. Furthermore, let $k = n^{\theta}$ in the sublinear regime and $k = \zeta n$ in the linear regime.  \Cref{algorithm} recovers the hidden bits of all agents correctly \whp, if, for any $\eps > 0$, the following holds. 
\begin{itemize}
    \item Sublinear Regime
    \begin{itemize}
        \item[(Z)] $ m \geq (4 \gamma + \eps) \frac{ \bc{1 + \sqrt{\theta}}^2 }{1-p} k \log(n)$
        \item[(GNC)] $ m \geq (4 \gamma + \eps)  \frac{ q \bc{1 + \sqrt{\theta}}^2 }{ \bc{1-p-q}^2 } n \log(n)$
    \end{itemize}
    \item Linear Regime
    \begin{itemize}
        \item[(Z/GNC)] $m \geq (16 \gamma + \eps) \frac{q + (1 - p - q) }{(1 - p - q)^2 } \zeta n \log(n)$
    \end{itemize}
\end{itemize}
\end{theorem}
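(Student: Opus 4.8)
The plan is to reduce the global recovery guarantee to a union bound over individual agents and to control each agent's misclassification probability through a large-deviations analysis of the query results incident to it. I would first fix the randomised query design underlying \cref{algorithm}; modelling it, for the analysis, as a Bernoulli design in which each agent joins each query independently with probability $\rho$, the per-agent query degree is $\Delta=m\rho$ and the expected number of planted $1$-agents per query is $\lambda=k\rho$. Because the channel acts independently on each agent--query read, conditioning on the planted assignment makes the results of the queries incident to a fixed agent $x_i$ independent across queries, each being a noisy sum of the other agents' states in that query. The entire analysis then localises to a single Poissonised query neighbourhood of size $\Delta$.

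Second, I would isolate the two error events \cref{algorithm} can commit: declaring a planted $1$-agent to be $0$, and declaring a planted $0$-agent to be $1$. In each case the decision reduces to comparing an additive score over the $\Delta$ incident queries against a threshold, and under the planted model this score is a sum of $\Delta$ i.i.d.\ bounded terms whose one-query mean gap $\delta$ and variance $v$ are explicit functions of $p$, $q$ and $\lambda$. For instance, in the Z-channel a query containing $x_i$ is positive with probability $1-p\,\eul^{-\lambda(1-p)}$ when $x_i$ is planted $1$ and with probability $1-\eul^{-\lambda(1-p)}$ when it is planted $0$, so $\delta=(1-p)\eul^{-\lambda(1-p)}$. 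A Cram\'er--Chernoff estimate then bounds the two per-agent error probabilities by $\exp(-\Delta I_1)$ and $\exp(-\Delta I_0)$, where $I_1$ and $I_0$ are the lower and upper large-deviation rates of the score at the chosen threshold.

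The balancing of these two exponents produces the stated constants. As there are $k=n^{\theta}$ planted $1$-agents and $\Theta(n)$ planted $0$-agents, the first-moment bounds demand $\Delta I_1\ge(1+\eps)\theta\log n$ and $\Delta I_0\ge(1+\eps)\log n$. Placing the threshold at a convex combination of the two score means and using that $I_0,I_1$ grow quadratically in the distance from the respective mean, the two square-root constraints add to the single requirement $\Delta\gtrsim(1+\sqrt{\theta})^2\,(v/\delta^2)\log n$; this is the source of the factor $(1+\sqrt{\theta})^2$, and at $\theta=1$ of the factor $4$ that turns $4\gamma$ into the $16\gamma$ of the linear regime. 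Converting $\Delta$ back to $m$ via $\Delta=m\rho$ and $\lambda=k\rho$ and optimising over the design density $\lambda$ then yields the three bounds; the optimiser equalises the competing terms at $\lambda(1-p)=1/2$, where $1-\eul^{-\lambda(1-p)}=\gamma$, which explains the prefactor $4\gamma$. The $k\log n$ versus $n\log n$ scaling reflects the source of the per-query variance: with no false positives the result fluctuates only through the $k$ planted $1$-agents (variance $\propto k\rho$, hence the $k\log n$ scale and the factor $1/(1-p)$), whereas a positive rate $q$ makes all $\Theta(n)$ reads noisy (variance $\propto n\rho\,q$, hence the $n\log n$ scale and the factor $q/(1-p-q)^2$); in the linear regime both populations are $\Theta(n)$, so the two contributions are comparable and combine into the numerator $q+(1-p-q)$.

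I expect this balancing step to be the crux. A crude union bound already delivers the correct orders $k\log n$ and $n\log n$, but the sharp constants $4\gamma$ and $(1+\sqrt{\theta})^2$ require, first, identifying the dominant misclassification pattern among exponentially many noise-and-assignment configurations --- a Laplace argument over the Poissonised per-query $1$-agent count and per-agent degree --- and, second, matching the true large-deviation rate to its quadratic (Gaussian) surrogate accurately enough that the two square-root constraints add without slack. Controlling the Poisson-to-Gaussian passage uniformly over the admissible thresholds, and checking that the design optimum $\lambda(1-p)=1/2$ is simultaneously optimal across both regimes and both channels, is the delicate part; by contrast the independence-by-conditioning setup and the concluding first-moment argument that both expected miscount totals are $o(1)$ at the stated $m$ are routine.
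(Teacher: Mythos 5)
Your high-level skeleton---per-agent scores, two error events with exponents $\theta\log n$ and $\log n$, a threshold at a convex combination of the two conditional means, and the quadratic balancing that yields $\bc{1+\sqrt{\theta}}^2$ (hence the factor $4$ at $\theta=1$)---is exactly the structure of the paper's proof. But the core of your per-query analysis is carried out for the wrong observation model. You treat each query as returning a binary positive/negative outcome: your formulas $1-p\,\eul^{-\lambda(1-p)}$ and $1-\eul^{-\lambda(1-p)}$ are the probabilities that an OR of noisy reads fires, i.e.\ noisy \emph{binary} group testing with a Poissonised, constant-mean number $\lambda=k\rho$ of planted ones per query. In the pooled data problem each query returns the \emph{sum} of the noisy reads, \cref{algorithm}'s score $\Psi_j$ is the sum of these sums over the distinct incident queries, and the query size is fixed at $\Gamma=n/2$, so every query contains roughly $k/2$ planted ones; there is no sparse-design regime in which an ``at least one'' event carries the information. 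Consequently your one-query gap $\delta=(1-p)\eul^{-\lambda(1-p)}$ and variance $v$ are not the quantities governing the theorem. In the paper the gap between the two conditional score means is $\Delta(1-p-q)$ with $\Delta=m/2$, the fluctuation scale is $\Erw\brk{\Xi_j^{pq}}\approx\Delta^\star\Gamma\bc{q+\frac kn(1-p-q)}$ (a sum of negatively associated binomials controlled by Chernoff bounds), and it is this ratio that produces the $k\log n$ versus $n\log n$ dichotomy and the factors $1/(1-p)$ and $q/(1-p-q)^2$.

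Second, your explanation of the constant $\gamma$ is not correct, and the step producing it would fail. The design is fixed by the algorithm (queries of size $n/2$ drawn with replacement); you are not free to optimise a Bernoulli density $\rho$, and a proof that optimises the design proves a statement about a different algorithm. In the paper, $\gamma=1-\eul^{-1/2}$ enters purely combinatorially: an agent's expected degree counted with multiplicity is $\Delta=m/2$, while its expected number of \emph{distinct} incident queries is $\Delta^\star=\gamma m$ (\cref{cor_delta_star}, taken from Gebhard et al.); the score aggregates distinct queries while the signal accrues per multi-edge, and the prefactor $4\gamma$ falls out of the ratio $\Delta^2/\bc{\Delta^\star\Gamma\,\cdot}$ together with the factor $2$ in the Chernoff exponent. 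Your claim that a Cram\'er--Chernoff optimisation lands exactly at $\lambda(1-p)=1/2$, ``where $1-\eul^{-\lambda(1-p)}=\gamma$,'' is reverse-engineered rather than derived: under your own quadratic surrogate, the quantity to maximise is $\rho\,\delta^2/v\propto u\eul^{-u}/(1-\eul^{-u})$ in $u=\lambda(1-p)$, which is monotone decreasing in $u$ and has no stationary point at $1/2$. So the proposal as written cannot recover the stated constants; the fix is to drop the binary-outcome reduction and the design optimisation and analyse the sum-valued neighbourhood score of the given multigraph, which is what the paper does.
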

\begin{remark}
We assume that $p, q \in [0, 1)$ with $p + q < 1$ do not depend on $n$. The conducted analysis directly implies that, asymptotically, $q = o \bc{\frac{k}{n}}$ behaves exactly as $q = 0$ and $q = \omega\bc{\frac{k}{n}}$ behaves as $q > 0$ in the previous theorem. 
We furthermore observe that our results for $p=q=0$ match the results of \cite{gebhard_2022}, which have been only shown for the case without noise.
\end{remark}

In the noisy query model, we assume that every query undergoes an independent $\cN(0, \lambda^2)$ noise.
Depending on the size of $\lambda$, \cref{algorithm} can be safely applied or not.
More precisely, the algorithm's success probability undergoes a \emph{phase transition}.
\begin{theorem}[Noisy Query Model]\label{thm_noisy}
Let $n$ be the number of agents and assume that $m$ queries are conducted. 
If $\lambda^2 = o \bc{ \frac{m}{\log n}}$, \cref{algorithm} recovers the hidden bits correctly \whp if the following holds.
\begin{itemize}
    \item Sublinear: $ m \geq (4 \gamma + \eps) { \bc{1 + \sqrt{\theta}}^2 } k \log(n)$,
    \item Linear: $m \geq (16 \gamma + \eps) \zeta n \log(n)$.
\end{itemize}
If $\lambda^2 = \Omega \bc{ m }$, \cref{algorithm} fails to recover the hidden bits with positive probability for any number of queries conducted.
\end{theorem}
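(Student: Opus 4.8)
The plan is to split along the phase transition: the positive part ($\lambda^2=o(m/\log n)$) is obtained as a perturbation of the noiseless analysis underlying \cref{thm_pq}, while the negative part ($\lambda^2=\Omega(m)$) follows from a two-point indistinguishability argument. For the positive part, recall that at the stated number of queries the greedy rule classifies each agent $x_i$ by thresholding a decision statistic $S_i$ formed from the results of the queries incident with $x_i$, and that the noiseless analysis certifies that \whp all $n$ of these statistics land on the correct side of the threshold, separated from it by a margin $\Delta_i$. Writing a query result as $Y_a=\sum_{x_j\in a}\sigma_j+g_a$ with $\vg\sim\cN(0,\lambda^2\vI)$, the Gaussian noise perturbs $S_i$ only through the independent term $N_i:=\sum_{a\ni i}g_a\sim\cN(0,d_i\lambda^2)$, where $d_i$ is the number of queries incident with $x_i$; since flipping $\sigma_i$ changes each incident query result by exactly one, the sensitivity of $S_i$, and hence the noiseless margin, satisfies $\Delta_i=\Theta(d_i)$. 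A Gaussian tail bound together with a union bound over the agents then gives
\begin{align}
\Pr\bc{\exists\,i:\ \abs{N_i}\geq \Delta_i/2}\ \leq\ \sum_{i=1}^{n}2\exp\bc{-\frac{\Delta_i^2}{8\,d_i\lambda^2}}\ =\ \sum_{i=1}^{n}2\exp\bc{-\Omega\!\bc{\frac{d_i}{\lambda^2}}}.
\end{align}
For the design realising the thresholds of \cref{thm_pq} each agent lies in a constant fraction of the queries, $d_i=\Theta(m)$, so the exponent is $\Omega(m/\lambda^2)=\omega(\log n)$ exactly when $\lambda^2=o(m/\log n)$, and the bound is $o(1)$. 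On this event no realisation of the noise flips a classification, so \cref{algorithm} follows its noiseless trajectory and succeeds; the thresholds on $m$, and hence the constants $4\gamma(1+\sqrt\theta)^2$ and $16\gamma$, are inherited verbatim from the $p=q=0$ case.

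For the negative part I would fix the true configuration $\sigma$ and let $\sigma'$ be obtained by exchanging the bits of one $1$-agent $x_i$ and one $0$-agent $x_j$, which exist since $0<k<n$. Conditioned on the realised design $\A$, the observations obey $\cN(\A\sigma,\lambda^2\vI)$ and $\cN(\A\sigma',\lambda^2\vI)$, two Gaussians with common covariance whose means differ by $\A(\ve_i-\ve_j)$. As this vector counts the queries containing exactly one of $x_i,x_j$, its squared norm is at most the total number of queries, $\norm{\A(\ve_i-\ve_j)}^2\leq m$, and the total variation distance of the two laws has the closed form
\begin{align}
\dTV(P_\sigma,P_{\sigma'})\ =\ 2\Phi\!\bc{\frac{\norm{\A(\ve_i-\ve_j)}}{2\lambda}}-1\ \leq\ 2\Phi\!\bc{\frac{\sqrt m}{2\lambda}}-1,
\end{align}
where $\Phi$ is the standard normal distribution function. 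When $\lambda^2=\Omega(m)$ the argument $\sqrt m/(2\lambda)$ is bounded by a constant independent of $m$, so $\dTV(P_\sigma,P_{\sigma'})$ stays bounded away from $1$ by a positive amount uniform in $m$. Averaging any decoder's error over the uniform prior on $\set{\sigma,\sigma'}$ then yields failure probability at least $\tfrac12\bc{1-\dTV(P_\sigma,P_{\sigma'})}=\Omega(1)$; because a symmetric random design treats the agents exchangeably, $\sigma$ and $\sigma'$ incur the same failure probability, so \cref{algorithm} errs on the given instance with a positive probability that does not decrease as the number of queries grows.

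The genuinely delicate step is the positive direction, and specifically extracting the two design-dependent facts $\Delta_i=\Theta(d_i)$ and $d_i=\Theta(m)$ from the noiseless analysis with their constants intact. Qualitatively the noise is clearly negligible once $\lambda^2=o(m/\log n)$, but the claim is quantitative: the perturbation must be shown small with enough slack that the threshold on $m$ is unchanged to leading order, so that $4\gamma(1+\sqrt\theta)^2$ and $16\gamma$ survive unaltered; any loss in the comparison between $\Delta_i$, $d_i$ and $m$ would corrupt these constants. The accompanying $\Omega(m)$ lower bound certifies that the analysis is tight up to the $\log n$ factor separating $o(m/\log n)$ from $\Omega(m)$, which is precisely the overhead of the union bound over the $n$ agents.
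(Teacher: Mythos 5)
Your proof of the positive part follows the same skeleton as the paper's (Gaussian tail bounds plus a union bound over agents, with the thresholds inherited from the noiseless analysis), but it contains a genuine gap at exactly the point you flag as delicate. You assert that the noiseless analysis leaves each statistic ``separated from the threshold by a margin $\Delta_i$'' and conclude that noise of magnitude $\Delta_i/2$ cannot flip any classification. This margin claim is false: $\Delta_i$ is the \emph{signal} (the sensitivity of the score to flipping the bit of $x_i$), but at the stated thresholds the fluctuations of the second-neighborhood sum $\Xi_j$ consume almost all of it --- that is precisely where the constants $4\gamma(1+\sqrt{\theta})^2$ and $16\gamma$ come from. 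In the noiseless analysis the signal $\Delta$ is split by the parameter $\alpha$ (conditions \eqref{eq_condition_k} and \eqref{eq_condition_n}), and the margin actually separating the worst-case scores from the decision threshold is only a small constant fraction of $\Delta$ that shrinks as $\eps \to 0$; noise as large as $\Delta_i/2$ can therefore reorder agents even when the noiseless run succeeds, so the conclusion that the algorithm ``follows its noiseless trajectory'' does not follow. The repair is the step the paper actually takes: bound the noise by $\tau_n\Delta_i$ for a slowly vanishing $\tau_n$. Your own tail estimate still works, giving per-agent failure probability $\exp\bc{-\Theta\bc{\tau_n^2 m/\lambda^2}}$, and precisely because $\lambda^2 = o\bc{m/\log n}$ one can choose $\tau_n = o(1)$ with $\tau_n^2 m/\lambda^2 = \omega(\log n)$; then the noise is $o(\Delta_i)$ simultaneously for all agents \whp and is absorbed into the $(1+o(1))$ factors of the Chernoff conditions, leaving the constants intact. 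As written, your estimate with margin $\Delta_i/2$ proves a correct probability bound but does not interface correctly with the noiseless analysis.

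Your negative part is a genuinely different argument from the paper's, and in fact a stronger one. The paper works directly with \cref{algorithm}: using the Gaussian lower tail (\cref{lem_chernoff_gaussian}) it shows that for $\lambda^2 = \Omega(m)$ each agent's aggregated noise $\cX_j \sim \cN\bc{0, \lambda^2\Delta_j^\star}$ exceeds the signal $\Delta_j$ with probability $\Theta(1)$, and a reverse-Markov argument then yields a constant fraction of swamped agents simultaneously, so the greedy ranking fails with positive probability. Your Le Cam two-point argument instead shows that the observation laws under $\sigma$ and under a single transposition $\sigma'$ have total variation bounded away from one, hence \emph{no} decoder can succeed with probability $1-o(1)$; this decoder-independent impossibility implies the claim for \cref{algorithm} and certifies that the phase transition is information-theoretic rather than an artifact of the greedy rule. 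One detail needs repair: $\norm{\A(\ve_i-\ve_j)}^2 \leq m$ is not a deterministic bound, since queries sample with replacement and entries of $\A$ may exceed one; however $\Erw\brk{\norm{\A(\ve_i-\ve_j)}^2} = (1+o(1))\,m$ and standard concentration gives $O(m)$ \whp, and restricting to this event before applying the two-point bound is all your argument needs. The final exchangeability step, giving both hypotheses the same error probability, is sound because the random design and the noise are invariant under relabeling agents.
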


\subsection{Related Work}
The pooled data problem can, in its simplest variants, be traced back to work of Dorfman \cite{dorfman_1943}, Shapiro \cite{shapiro_1960}, \Erdos~and \Renyi~\cite{erdos_1963}, and Djackov~\cite{djackov_1975}. 
Well-known variants of the problem have been studied under the name of \textit{(quantitative) group testing} \cite{cao_2014,karimi_2019,karimi_2019_2,wang_2015,AJS_book, coja_spiv, feige2020quantitative, hahnklimroth2021near}, \emph{threshold} group testing \cite{Chan2013, DeMarco2020}, \textit{coin weighing} \cite{bshouty_2009,djackov_1975,shapiro_1960,soderberg_1963}, or \textit{pooled data} \cite{alaoui_2017,scarlett_2017,wang_2016}.
It has a multitude of applications, from computational biology \cite{du2000combinatorial,cao_2014,sham_2002} over traffic monitoring \cite{wang_2015} and confidential data transfer \cite{adam_1989,dinur_2003} to machine learning \cite{wang_2016,liang2021neural,martins_2014,NIPS2014_fb8feff2}. For a survey of applications, see \cite{du2000combinatorial}.

The \emph{binary group testing problem} was recently studied under noise. 
In this simple variant, queries output only the information whether at least one agent with hidden bit $1$ is contained.  Those studies compromise fundamental information-theoretic results \cite{AJS_book} and algorithmic limits \cite{scarlett_2020}
as well as practical applications \cite{zhu2020noisy}.

The variant we study is less explored. 
Even though the noiseless case is well understood w.r.t.\  information-theoretic and algorithmic aspects \cite{cao_2014,grebinski_2000, karimi_2019_2,wang_2015,AJS_book, feige2020quantitative, hahnklimroth2021near, Marco_2013, alaoui_2017}, only few results for noisy measurements are known.
These results discuss fundamental limits assuming unlimited computational power \cite{li2022combinatorial, scarlett_2017}, but no efficient algorithms are known in this setting.
Based on an efficient sequential algorithm designed for the noiseless case \cite{gebhard_2022}, we extend the line of research by a careful analysis of a distributed variant under noise.

In line with relevant recent literature on variants of the pooled data problem, agents are assigned to several queries by placing each agent independently and randomly into queries \cite{alaoui_2017,karimi_2019,lee_2015,scarlett_2017,wang_2016, hahnklimroth2021near}. We furthermore allow that an agent is included in a query multiple times, which adapts techniques used in a variety of other statistical inference problems~\cite{aldridge_2016,aco_2019,johnson_2019}.

Finally, we emphasize that we restrict ourselves to the \emph{non-adaptive} setting, in which all queries must be conducted independently and in parallel. This setting became the prevalent variant in the noiseless case recently 
\cite{aldridge_2014,scarlett_2017,zhang_2013,alaoui_2017, hahnklimroth2021near, gebhard_2022, AJS_book, wang_2016}. This reflects the fact that in most applications the time to conduct a single queries dominates the whole reconstruction time \cite{liang2021neural, martins_2014, NIPS2014_fb8feff2, kong2012automatic}.


\section{Model}
\label{sec:model}
We assume that out of $n$ agents, exactly $k$ agents have the hidden bit 1 while the remaining $n-k$ agents have the bit 0. We denote the bits by $x_1 \ldots x_n$ and let $\SIGMA \in \cbc{0,1}^n$ be the so-called \emph{ground-truth}: a vector that represents the \emph{real} (unknown) bit of each agent. As usual in reconstruction problems, we assume that $\SIGMA$ is uniformly chosen among all binary vectors of Hamming weight $k$ and length $n$.

Assume that $m$ queries are conducted in a distributed fashion and let $a_1, \ldots, a_m$ denote these queries.
Then every $a_j$ is a multi-set of agents $x_1, x_2, \ldots$.
In our algorithm we use a model where each query has the same size $\Gamma = \frac{n}{2}$, and any query picks $\Gamma$ agents from $V$ uniformly at random with replacement.
This is in line with recently studied variants of the pooled data problem without noise \cite{alaoui_2017, feige2020quantitative, gebhard_2022}.

As in the recently studied noiseless variant of the problem, we let $d = d(n): \NN \to \RR^+$ be a function and set the number of query-nodes $m$ in the network to $m = d k \log(n).$

We let $\Delta_1, \ldots, \Delta_n$ denote the (random) number how often each agent is queried. Furthermore, we let $\Delta^\star_1, \ldots, \Delta^\star_n$ denote the (random) number of \emph{distinct} such queries.
Finally, we let $\hat \SIGMA \in \RR^m$ denote the random vector of query results. 

We conclude that an instance of the pooled data problem can be represented by a \emph{bipartite multi-graph}, with agents being one class and queries being the other class. An edge indicates that an agent is queried by a given query node. See \cref{fig:figure_pooling} for an example.
In line with this representation we denote by $\partial x$ the multi-set of neighbors in the graph (multi-edges counted multiple times) and by $\partial^\star x$ the set of distinct neighbors. Thus, $\abs{\partial x_i} = \Delta_i$ and $\abs{\partial^\star x_i} = \Delta^\star_i$ for all agents and $\abs{ \partial a_j } = \Gamma$ for all queries $a_j$.

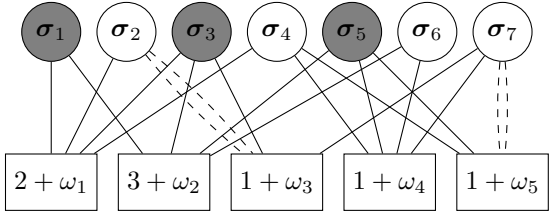
\begin{figure}[t]
\centering
\begin{tikzpicture}[scale=1]
\node[circle, draw, minimum width=0.75cm, fill=black!50] (x0) at (0, 0) {$\SIGMA_1$};
\node[circle, draw, minimum width=0.75cm] (x1) at (1,0) {$\SIGMA_2$};
\node[circle, draw, minimum width=0.75cm, fill=black!50] (x2) at (2, 0) {$\SIGMA_3$};
\node[circle, draw, minimum width=0.75cm] (x3) at (3, 0) {$\SIGMA_4$};
\node[circle, draw, minimum width=0.75cm, fill=black!50] (x4) at (4, 0) {$\SIGMA_5$}; 
\node[circle, draw, minimum width=0.75cm] (x5) at (5, 0) {$\SIGMA_6$};
\node[circle, draw, minimum width=0.75cm] (x6) at (6, 0) {$\SIGMA_7$};

\node[rectangle, draw, minimum width=0.75cm, minimum height=0.75cm] (a1) at (0, -2.0) {$2 + \omega_1$};
\node[rectangle, draw, minimum width=0.75cm, minimum height=0.75cm] (a2) at (1.5, -2.0) {$3 + \omega_2$};
\node[rectangle, draw, minimum width=0.75cm, minimum height=0.75cm] (a3) at (3, -2.0) {$1 + \omega_3$};
\node[rectangle, draw, minimum width=0.75cm, minimum height=0.75cm] (a4) at (4.5, -2.0) {$1 + \omega_4$};
\node[rectangle, draw, minimum width=0.75cm, minimum height=0.75cm] (a5) at (6, -2.0) {$1 + \omega_5$};

\path[draw] (x0) -- (a1);
\path[draw] (x0) -- (a2);
\path[draw] (x1) -- (a1);
\path[-] (x1) edge [bend left=5,dashed] (a3);
\path[-] (x1) edge [bend right=5,dashed] (a3);
\path[draw] (x2) -- (a1);
\path[draw] (x2) -- (a3);
\path[draw] (x2) -- (a2);
\path[draw] (x3) -- (a1);
\path[draw] (x3) -- (a4);
\path[draw] (x3) -- (a5);
\path[draw] (x4) -- (a2);
\path[draw] (x4) -- (a5);
\path[draw] (x4) -- (a4);
\path[draw] (x5) -- (a2);
\path[draw] (x5) -- (a4);
\path[draw] (x6) -- (a3);
\path[draw] (x6) -- (a4);
\path[-] (x6) edge [bend left=5,dashed] (a5);
\path[-] (x6) edge [bend right=5,dashed] (a5);
\end{tikzpicture}
\caption{A small example with $n = 7$ agents and ground-truth $\SIGMA = (1,0,1,0,1,0,0)$. The multi-graph indicates which agents are queried by which query node. The goal is to reconstruct $\SIGMA$ given only $\G$ and the query results. The results are subject to noise $\omega_1, \ldots, \omega_5$.}
\label{fig:figure_pooling}
\end{figure}

\subsection{\pqChannel Model}
The \pqchannel represents typical noisy observations in technological settings like machine learning in which communication is subject to random noise. The level of noise can be dependent on the ground truth value of the queried hidden bit. With respect to the pooled data problem, the \pqchannel is defined as follows. In one query, we find exactly $\Gamma$ single agents that are queried (probably more than once). For each of those single edges in the bipartite graph (see \cref{fig:figure_pooling}), we receive a different signal depending on the ground-truth of the agent. More precisely, if agent $x$ is queried, then the query result is increased by a random variable $S(x)$ where
\begin{align*}
    S(x) = \begin{cases} 1 & \text{with probability } q \text{ if } \SIGMA_j = 0 \\
    1 & \text{with probability } 1 - p \text{ if } \SIGMA_j = 1 \\
    0 & \text{with probability } 1-q \text{ if } \SIGMA_j = 0 \\
    0 & \text{with probability } p \text{ if } \SIGMA_j = 1.
    \end{cases}
\end{align*}
If $q = 0$, this model is known as the \emph{binary asymmetric channel}, or \emph{Z-channel}.
We assume throughout the paper that $p + q < 1$ and $p, q \in [0, 1)$ are (known) constants.

\subsection{Noisy Query Model}
In the noisy query model, we assume that all hidden bits are measured correctly. Nevertheless, the reading procedure that evaluates the sum of the hidden bits, is exhibited to Gaussian noise. More precisely, 
\begin{align*}
    \hat \SIGMA_a = \sum_{x \in \partial a} \SIGMA_x + \cW_a,
\end{align*}
where $\cW_1, \ldots, \cW_m$ are independent Gaussians with mean $0$ and variance $\lambda^2$, thus $\cW_a \sim \cN(0, \lambda^2)$.
We emphasize that this noise model has a second interpretation: on each sample out of the $\Gamma$ probes in the pooled query $a$, we have a small Gaussian fluctuation distributed as $\cN \bc{0, \lambda^2 \Gamma^{-1}}$ independently from all other sources of noise.
This represents the inaccuracy of pipetting machines quite well.

\section{Algorithms} \label{sec:algorithm}

\paragraph{Noisy Maximum Neighborhood Algorithm}
The Maximum Neighborhood Algorithm is a sequential greedy algorithm that was recently proposed and analyzed in the noiseless case by Gebhard et al.\ \cite{gebhard_2022}. We extend their results in two ways. First, we introduce an equivalent distributed variant.
Secondly,  we analyze the performance of this algorithm under the noise models defined in \cref{sec:model}.

On an intuitive level, our algorithm works as follows. Agents interact in a classical message passing environment. Query nodes sample multi-sets of agents and measure their bits (subject to noise). Then they send their measurements to all involved agents. The agents sums up all query results they receive and sort themselves via a sorting network (see, e.g., \cite{batcher1968sorting,sortingnetworks}). Those agents with the highest sums are declared as having hidden bit one. The distributed algorithm is formally specified in \cref{algorithm}. A more detailed description is given in \cref{sec_noiseless}.

\begin{algorithm}[ht]

\DontPrintSemicolon
\SetAlgoVlined
\SetKwFor{ForParallel}{for}{do in parallel}{end}
\SetKwFor{ForAt}{at}{do}{end}
\SetKw{KwTo}{to}
\SetKw{KwWith}{with}
\SetKwFunction{Query}{query}
\SetKwProg{Initially}{I.~Perform Measurements in Parallel}{}{end}
\SetKwProg{Reconstruct}{II.~Reconstruct Bits via a Sorting Network}{}{end}
\Initially{}{
add $m$ query nodes to the network\;
\ForAt{query node $a_j$}{
    sample a set of agents $\cbc{v_1, \dots, v_\Gamma}$ u.a.r.\ with replacement from $V$\;
    measure $\hat \SIGMA_j = \sum_{i = 1}^{\Gamma} \SIGMA(v_i)  + \omega_{ij}$ \;
    \tcc*{$\hat \SIGMA_j$ {\normalfont\small is subject to noise!}}
    let $\partial^\star a_j$ be the set of \textbf{distinct} neighbors of $a_j$.\;
    send message $\hat \SIGMA_j$ to each agent in $\partial^\star a_j$ 
}
\ForAt{agent $x_i$ \KwWith incoming message $\hat \SIGMA_j$}{
    update score $\Psi_i \gets \Psi_i + \hat \SIGMA_j$\;
    update degree $\Delta^\star_i \gets \Delta^\star_i + 1$\;
    initialize permutation $\pi_i = i$\;
}
}
\Reconstruct{}{
use a sorting network on $\{x_1, \dots, x_n\}$ to sort the\; permutation $\pi$ on $[n]$ according to
    $\displaystyle \Psi_i - \Delta^\star_i{k}/{2}$\;
    \tcc{{\normalfont\small set the agents with the $k$ largest scores to 1}}
    \parbox{2.5cm}{\textbf{if} $\pi_i \leq k$ \textbf{then}} agent $x_i$ outputs $1$\;
    \parbox{2.5cm}{\textbf{else} } agent $x_i$ outputs $0$\;
}
\caption{Greedy Reconstruction}
\label{algorithm}
\end{algorithm}

\paragraph{Approximate Message Passing (AMP)}
AMP finds its roots in statistical physics and is assumed to be optimal in robust reconstruction problems \cite{donoho_amp, donoho_amp2}. Robust means that the influence of one single agent in a specific query is negligible. This is the case for the pooled data problem: we expect about $k/2$ agents with hidden bit one in a query such that a difference of $\pm 1$ is not detectable (in an asymptotic setting).
%
AMP is frequently applied to the so-called \emph{compressed sensing problem} \cite{Le_Gallo_2018}, which can be seen as the following natural generalization of the pooled data problem: each agent has a hidden opinion $\SIGMA_j \in \RR$ (rather than a discrete bit). For this problem, AMP is assumed to be optimal if the number of non-zero opinions $k$ is small compared to $n$. A non-rigorous and heuristic but intuitively convincing analysis of AMP in the noiseless linear variant of the pooled data problem is given by Alaoui et al.~\cite{alaoui_2017}. 

Formally, the algorithm is defined as follows.
As an input, it gets 
the pooling graph (\cref{fig:figure_pooling}) as an adjacency matrix $\vec A \in \NN_0^{m \times n}$ and the query result vector $\hat \SIGMA$.
Observe that $\vec A \SIGMA + \cW = \hat \SIGMA$ where $\cW$ is (unknown) noise.
AMP defines the following update rules \cite{Le_Gallo_2018} that give an estimate of $\SIGMA$:
\begin{align*}
    \sigma^{(t+1)} &= \eta_t \bc{ \vec A^T z^{(t)} + \sigma^{(t)}} \\
    z^{(t)} & = \hat \SIGMA - \vec A \sigma^{(t)} + \frac{n}{m} \frac{1}{n} \sum_{i=1}^n \eta'_{t-1} \bc{ \vec A^T z^{(t-1)} + \sigma^{(t-1)} }.
\end{align*}
Hereby, $(\eta_t)_{t \geq 0}$ is a family of possibly time-dependent functions $\eta_t: \RR \to \RR$ that are applied coordinate wise such that $\eta_t(\sigma) = \bc{\eta_t(\sigma_1), \ldots, \eta_t(\sigma_n)}$ for $\sigma \in \RR^n$.
The vector $z^{(t)}$ describes the deviation of the query results with regard to the current estimate of $\SIGMA$.
It contains the so-called \emph{Onsager term} $\frac{n}{m} \frac{1}{n} \sum_{i=1}^n \eta'_{t-1} \bc{ \vec A^T z^{(t-1)} + \sigma^{(t-1)} }$, which accounts for under-sampling effects if $k/n$ is small \cite{donoho_amp, donoho_amp2}.
The standard initialization reads $\sigma^{(0)} = (0, \ldots, 0)$.

AMP admits the following intuitive description. In every step, every query node sends out a \emph{message} to all its connected agents that contains information about how well the current estimate $\sigma^{(t)}$ is compatible with its query result. All agents then update their estimate of their hidden bit. This is iterated many rounds, which is the reason why AMP is hard to analyze rigorously.
We remark that AMP has an intuitive description in a distributed message passing environment. However, the communication overhead becomes substantial rendering (unmodified) AMP inefficient in this setting \cite{han2014distributed}.

For our simulations we use the recent implementation by Safarpour et al.\ \cite{Safarpour_2018}.

\section{Analysis} \label{sec_theoretic_proofs}
\subsection{Preliminaries}
Let $\vec G$ denote the random bipartite multi-graph described in the model section with $n$ agents $x_1, \ldots, x_n$ and $m$ queries $a_1, \ldots, a_m$. We let $\cG$ denote the $\sigma-$algebra induced by the edges of the graph. Thus, the sequences of random degrees $\Delta_1, \ldots, \Delta_n$ and $\Delta_1^\star, \ldots, \Delta_n^\star$ are $\cG-$measurable.  
Those random quantities are tightly concentrated around their means.
\begin{lemma} \label{cor_conc_delta} Let $\Delta = \Erw \brk{ \Delta_j } = m \Gamma n^{-1} = m/2$.
With probability $1 - o(n^{-1})$, we find
\begin{align*}
    \Delta - \log(n) \sqrt{\Delta} \leq \min_{j \in [n]} \Delta_i \leq \max_{j \in [n]} \Delta_i \leq \Delta + \log(n) \sqrt{\Delta}.   
\end{align*}\label{conc_delta}
\end{lemma}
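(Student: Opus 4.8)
The plan is to recognise each $\Delta_j$ as a binomial random variable, control its deviation from the mean by a Chernoff--Bernstein bound, and finish with a union bound over the $n$ agents. First I would observe that, because every query draws its $\Gamma = n/2$ agents independently and uniformly at random \emph{with replacement}, and distinct queries are mutually independent, the count $\Delta_j$ of how often agent $x_j$ is picked is a sum of $m\Gamma$ independent Bernoulli trials, each succeeding with probability $1/n$. Hence $\Delta_j \sim \Bin(m\Gamma, 1/n)$ with mean $\Erw\brk{\Delta_j} = m\Gamma/n = m/2 = \Delta$. Note that the family $(\Delta_j)_{j \in [n]}$ is \emph{not} independent (the slots of a single query sum to $\Gamma$), but independence across agents will not be needed: a union bound is lossless here.

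Second, I would fix one agent $j$ and bound the probability that $\Delta_j$ strays from $\Delta$ by more than $t \coloneqq \log(n)\sqrt{\Delta}$. Using $\Var(\Delta_j) \leq \Delta$, a Bernstein-type inequality for the binomial gives
\begin{align*}
\Pr\brk{\abs{\Delta_j - \Delta} \geq t} \leq 2\exp\bc{-\frac{t^2}{2\bc{\Delta + t/3}}}.
\end{align*}
The crucial quantitative point is that $\Delta$ is large: since $m = d k \log(n)$ with $k = n^\theta$ or $k = \zeta n$, we have $\Delta = m/2 \to \infty$ at least polynomially in $n$, so that $t/\Delta = \log(n)/\sqrt{\Delta} = o(1)$. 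Consequently the denominator equals $\bc{1 + o(1)}\Delta$ and the exponent simplifies to $\bc{1-o(1)}\, t^2/(2\Delta) = \bc{1-o(1)} \log^2(n)/2$. Therefore $\Pr\brk{\abs{\Delta_j - \Delta} \geq t} \leq 2\exp\bc{-\bc{1-o(1)}\tfrac12\log^2(n)}$, which decays faster than any fixed power of $n$.

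Finally, I would take a union bound over all $n$ agents, obtaining
\begin{align*}
\Pr\brk{\exists j \in [n] : \abs{\Delta_j - \Delta} > t} \leq 2n\exp\bc{-\bc{1-o(1)}\tfrac12\log^2(n)} = o(n^{-1}),
\end{align*}
because $n\exp\bc{-\tfrac12\log^2(n)} = \exp\bc{\log(n) - \tfrac12\log^2(n)} \to 0$ super-polynomially. On the complementary event every $\Delta_j$, and in particular both $\min_j \Delta_j$ and $\max_j \Delta_j$, lies within $\log(n)\sqrt{\Delta}$ of $\Delta$, which is precisely the claimed two-sided bound.

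The argument is essentially routine; the only point demanding care is verifying that the tolerance $\log(n)\sqrt{\Delta}$ sits in the Gaussian (moderate-deviation) regime, i.e.\ that $t = o(\Delta)$, so that the Bernstein exponent is genuinely of order $\log^2(n)$ rather than being damped by the additive $t/3$ correction. This is guaranteed by $m \geq k\log(n) \geq n^\theta \log(n)$, which forces $\Delta$ to grow polynomially and thus makes the $o(n^{-1})$ failure bound comfortably attainable.
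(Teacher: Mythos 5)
Your proposal is correct and follows essentially the same route as the paper: identify $\Delta_j \sim \Bin(m\Gamma, n^{-1})$ via the half-edge throwing argument, apply a Chernoff/Bernstein-type tail bound at deviation $\log(n)\sqrt{\Delta}$ (yielding an exponent of order $\log^2(n)$), and finish with a union bound over the $n$ agents. The paper's proof is just a terser version of this, citing its appendix Chernoff bound and leaving the exponent computation and union bound implicit, both of which you spell out correctly.
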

\begin{proof}
Exactly $m \Gamma$ half-edges are thrown uniformly at random into the set of $n$ agents. For a single agent $x_j$, the probability to receive a specific half-edge is $n^{-1}$, independent of the choices of the different half-edges. Therefore, $\Delta_i \sim \Bin(m \Gamma, n^{-1})$ and the corollary follows from Chernoff bounds (\cref{lem_chernoff} in the appendix).
\end{proof}
\begin{lemma}[Lemma 3 of \cite{gebhard_2022}] \label{cor_delta_star}
We have, \whp, 
\begin{align*}
    \Delta^\star_i = 2 \bc{ 1 - \exp \bc{-1/2} } \Delta_i + O \bc{ \log(n) \sqrt{\Delta} }.
\end{align*}
\label{num_multiedges}
\end{lemma}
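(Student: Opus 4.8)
The plan is to express both degree quantities for a fixed agent $x_i$ as sums over the $m$ independent queries, apply Chernoff bounds to each, and then eliminate the common parameter $m$ to obtain the claimed linear relation between $\Delta^\star_i$ and $\Delta_i$.

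First, for a fixed agent $x_i$ and query $a_j$, let $Y_{ij}$ denote the number of times $x_i$ is sampled by $a_j$. Since $a_j$ draws $\Gamma = n/2$ agents uniformly with replacement, $Y_{ij}\sim\Bin(\Gamma,n^{-1})$, and the $Y_{ij}$ are independent across $j\in[m]$. Then $\Delta_i=\sum_{j} Y_{ij}$ and $\Delta^\star_i=\sum_{j}\mathbb{1}\{Y_{ij}\geq 1\}$; in particular $\Delta^\star_i\sim\Bin(m,p^\star)$ with $p^\star = 1-(1-n^{-1})^{\Gamma}$. A short computation gives $p^\star = 1-\exp(-1/2)+O(n^{-1})$, using $(1-n^{-1})^{n/2}=\exp(-1/2)\bc{1+O(n^{-1})}$.

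Second, I would apply a Chernoff bound (as in \cref{cor_conc_delta}) to the binomial $\Delta^\star_i$, whose mean $mp^\star=\Theta(m)=\Theta(\Delta)$. A deviation of order $\log(n)\sqrt{mp^\star}=O(\log(n)\sqrt{\Delta})$ fails with probability $o(n^{-1})$, so whp
$$\Delta^\star_i = mp^\star+O\bc{\log(n)\sqrt{\Delta}}=m\bc{1-\exp(-1/2)}+O\bc{\log(n)\sqrt{\Delta}},$$
where the last step absorbs the $O(m/n)=O(\log n)$ correction coming from $p^\star$ into the error term (note $m/n=O(\log n)$ in both the linear and sublinear regimes, and $\sqrt{\Delta}\geq 1$). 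Simultaneously, \cref{cor_conc_delta} gives $\Delta_i=m/2+O(\log(n)\sqrt{\Delta})$ whp, that is, $m=2\Delta_i+O(\log(n)\sqrt{\Delta})$. Substituting this into the estimate for $\Delta^\star_i$, and using that the factor $1-\exp(-1/2)<1$ keeps every error term of order $\log(n)\sqrt{\Delta}$, yields $\Delta^\star_i = 2\bc{1-\exp(-1/2)}\Delta_i+O(\log(n)\sqrt{\Delta})$ for the fixed agent with probability $1-o(n^{-1})$. A union bound over all $n$ agents then gives the statement whp.

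The main subtlety is that $\Delta_i$ and $\Delta^\star_i$ are \emph{not} independent: they are coupled through the same variables $Y_{ij}$, so they cannot be treated as separate experiments. The key observation is that this coupling is irrelevant, because each quantity is on its own a sum of $m$ independent terms and hence concentrates around a known multiple of $m$; intersecting the two concentration events (each of probability $1-o(n^{-1})$) and eliminating $m$ produces the desired relation without ever needing the joint law. The only remaining care is bookkeeping, namely verifying that replacing $p^\star$ by $1-\exp(-1/2)$ and multiplying the $O(\log(n)\sqrt{\Delta})$ error by the constant $2\bc{1-\exp(-1/2)}$ never inflates the error beyond $O(\log(n)\sqrt{\Delta})$, which holds precisely because $m/n=O(\log n)$ in both regimes.
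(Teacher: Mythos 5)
Your proposal is correct, but note that the paper itself never proves this statement: it is imported verbatim as Lemma~3 of \cite{gebhard_2022}, and the only thing the paper does with it is combine it with \cref{cor_conc_delta} to obtain \cref{cor_conc_delta_star}. What you have supplied is the natural self-contained argument that the citation hides. Your key steps all check out: the per-query multiplicities $Y_{ij}\sim\Bin(\Gamma,n^{-1})$ are indeed independent across $j$, so $\Delta^\star_i\sim\Bin\bc{m,p^\star}$ with $p^\star=1-(1-n^{-1})^{n/2}=1-\exp(-1/2)+O(n^{-1})$; the Chernoff deviation $\log(n)\sqrt{mp^\star}$ fails with probability $\exp\bc{-\Theta(\log^2 n)}=o(n^{-1})$ (in fact $o(n^{-c})$ for every constant $c$, so the union bound over all $n$ agents is comfortable); and the elimination of $m$ via $m=2\Delta_i+O\bc{\log(n)\sqrt\Delta}$ is legitimate because you only intersect two events of probability $1-o(n^{-1})$, never needing the joint law of $\Delta_i$ and $\Delta^\star_i$ --- your explicit handling of this dependence is the one point a careless argument would botch, and you got it right. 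The bookkeeping point that $O(mp^\star)-m\bc{1-\exp(-1/2)}=O(m/n)$ must be absorbed into $O\bc{\log(n)\sqrt\Delta}$ is also handled correctly, and your observation that this needs $m/n=O(\log n)$ is accurate in both regimes considered by the paper (including the general-noisy-channel sublinear case where $m=\Theta(n\log n)$). The benefit of your route is that it makes the paper's preliminaries independent of \cite{gebhard_2022}; the cost is nil, since the argument is short and uses only the Chernoff machinery (\cref{lem_chernoff}) the paper already invokes for \cref{cor_conc_delta}.
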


\begin{corollary} \label{cor_conc_delta_star} Let $\Delta^\star = \Erw \brk{ \Delta^\star_j } = \bc{ 1 - \exp \bc{-1/2} } m$.
With probability $1 - o(n^{-1})$, we find
{\dense
\begin{align*}
    \Delta^\star - \log^2(n) \sqrt{\Delta^\star} \leq \min_{j \in [n]} \Delta^\star_i \leq \max_{j \in [n]} \Delta^\star_i \leq \Delta^\star + \log^2(n) \sqrt{\Delta^\star}.
\end{align*}
}
 \label{conc_deltastar}  
\end{corollary}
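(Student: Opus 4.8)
The plan is to obtain the corollary by combining the two preceding results rather than arguing from scratch. By \cref{cor_delta_star} we have, \whp\ and uniformly over all agents, $\Delta^\star_i = 2\bc{1 - \exp\bc{-1/2}}\Delta_i + O\bc{\log(n)\sqrt{\Delta}}$, while \cref{cor_conc_delta} pins the multiplicity degrees $\Delta_i$ to the window $\Delta \pm \log(n)\sqrt{\Delta}$ around their common mean $\Delta = m/2$. First I would substitute the bound $\abs{\Delta_i - \Delta} \leq \log(n)\sqrt{\Delta}$ into the expression from \cref{cor_delta_star}. Writing $2\bc{1-\exp\bc{-1/2}}\Delta_i = 2\bc{1-\exp\bc{-1/2}}\Delta + 2\bc{1-\exp\bc{-1/2}}\bc{\Delta_i - \Delta}$ and recalling $\Delta^\star = \bc{1-\exp\bc{-1/2}}m = 2\bc{1-\exp\bc{-1/2}}\Delta$, the leading term is exactly $\Delta^\star$, and the two remaining contributions are each $O\bc{\log(n)\sqrt{\Delta}}$. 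Hence $\Delta^\star_i = \Delta^\star + O\bc{\log(n)\sqrt{\Delta}}$ for every $i$.

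It then remains to re-express the error in terms of $\sqrt{\Delta^\star}$ instead of $\sqrt{\Delta}$. Since $\Delta^\star = 2\bc{1-\exp\bc{-1/2}}\Delta = \Theta(\Delta)$, we have $\sqrt{\Delta} = \Theta\bc{\sqrt{\Delta^\star}}$, so $O\bc{\log(n)\sqrt{\Delta}} = O\bc{\log(n)\sqrt{\Delta^\star}}$, which sits comfortably below $\log^2(n)\sqrt{\Delta^\star}$ for large $n$. Both input events hold with probability $1 - o(n^{-1})$, so their intersection does too, yielding the claim. The jump from $\log(n)$ in \cref{cor_conc_delta} to $\log^2(n)$ here is precisely the slack that swallows the implicit constant in the $O\bc{\cdot}$ term of \cref{cor_delta_star}.

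The step I expect to require the most care is uniformity over all $n$ agents with the stated probability $1 - o(n^{-1})$. Because \cref{cor_delta_star} is phrased as a \whp\ statement, a union bound over the $n$ agents is only valid if its per-agent failure probability is $o(n^{-2})$, so I would revisit its proof to confirm the tail is strong enough, or restate it uniformly. A clean, self-contained alternative that avoids this issue altogether is to argue directly: for a fixed agent $x_i$ the events ``$x_i$ appears in query $a_j$'' are independent across $j \in [m]$ with common probability $1 - \bc{1-1/n}^\Gamma$, so $\Delta^\star_i \disteq \Bin\bc{m,\, 1 - \bc{1-1/n}^\Gamma}$ exactly. Since $1 - \bc{1-1/n}^{n/2} = 1 - \exp\bc{-1/2} + O(1/n)$, the mean equals $\Delta^\star + O(m/n)$, and a Chernoff bound (\cref{lem_chernoff}) with deviation $\log^2(n)\sqrt{\Delta^\star}$ gives per-agent failure probability $\exp\bc{-\Omega\bc{\log^4 n}}$; the $O(m/n)$ shift in the mean is negligible against this deviation in both the linear and sublinear regimes, and a union bound over the $n$ agents then delivers the $1 - o(n^{-1})$ guarantee directly.
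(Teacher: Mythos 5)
Your first two paragraphs are precisely the paper's proof: the paper dispatches this corollary in one line as a direct consequence of \cref{cor_conc_delta} and \cref{cor_delta_star}, and your substitution argument simply writes that consequence out (correctly, including the observation that the jump to $\log^2(n)$ absorbs the implicit constants and the $\Theta(\Delta)$ versus $\Theta(\Delta^\star)$ conversion). The genuinely different contribution is your final paragraph, and it addresses a real weakness in the paper's own argument: \cref{cor_delta_star} is quoted only as a \whp\ statement, i.e.\ with failure probability $o(1)$, so intersecting it with the $1-o(n^{-1})$ event of \cref{cor_conc_delta} formally yields only a $1-o(1)$ guarantee, not the $1-o(n^{-1})$ the corollary asserts. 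Your direct route repairs this: since each query samples its $\Gamma$ slots independently with replacement, $\Delta^\star_i \disteq \Bin\bc{m, 1-\bc{1-1/n}^\Gamma}$ exactly, the success probability is $1-\exp\bc{-1/2}+O(1/n)$ so the mean is $\Delta^\star + O(m/n)$, and a Chernoff bound at deviation $\log^2(n)\sqrt{\Delta^\star}$ gives per-agent failure probability $\exp\bc{-\Omega\bc{\log^4 n}} = o(n^{-2})$, after which the union bound over $n$ agents and the negligibility of the $O(m/n)$ shift (in both the linear and sublinear regimes, as you check) deliver the stated probability. What each approach buys: the paper's route is shorter and reuses the external lemma from \cite{gebhard_2022}, but as stated it proves a weaker probability bound; your binomial argument is self-contained and is the cleanest way to actually obtain $1-o(n^{-1})$, which matters because this corollary feeds into the event $\cR$ that later conditioning relies on.
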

\begin{proof}
    This is a direct consequence of \cref{cor_conc_delta} and \cref{cor_delta_star}.
\end{proof}

We denote by $\cR$ the event that the concentration properties of \cref{conc_delta,num_multiedges,conc_deltastar} hold in the random graph $\vec G$. By the previous corollaries, we have $\Pr \bc{ \cR } = 1 - o(1)$.

\subsection{Recap of the Noiseless Case} \label{sec_noiseless}
As \cref{algorithm} is a distributed variant of the \emph{maximum neighborhood algorithm} of \cite{gebhard_2022}, we recap the main idea of the algorithm in the noiseless case. If any agent sums up the distinct query results it is part of, one might hope that agents with hidden bit 1 have a larger \emph{neighborhood sum}\footnote{The expression neighborhood sum was introduced by \cite{gebhard_2022} and refers to the representation as a bipartite graph.}. 

By definition, agent $x_j$ is part of $\Delta^\star_j$ distinct queries. 
The other agents that appear in a query with $x_j$ are independent of $x_j$. In terms of the random bipartite graph, we see that the second neighborhood of $x_j$ is (almost) equally distributed between agents with hidden bit 1 and hidden bit 0. If we count the number of observed hidden bits in the second neighborhood of $x_j$ under the ground-truth $\SIGMA$, this is a binomially distributed random variable $\Xi_j$.
\begin{lemma}[Corollary 4 of \cite{gebhard_2022}]\label{cor_xi_nonoise}
The number of hidden bits with value one in the second neighborhood of agent $x_j$ is distributed as $\Xi_j$ where
\[ \Xi_j \sim \Bin \bc{ \Gamma \Delta_j^\star - \Delta_j, \frac{k - \vecone \cbc{ \SIGMA_j = 1 }}{n - 1} }.\]
\end{lemma}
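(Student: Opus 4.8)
The plan is to compute the law of $\Xi_j$ by a deferred-decisions argument that keeps the identities of the second-neighborhood half-edges random until the very end. First I would pin down the number of trials. Agent $x_j$ lies in exactly $\Delta^\star_j$ distinct queries, each of which consists of $\Gamma$ agent-slots (half-edges), for a total of $\Gamma \Delta^\star_j$ slots; of these, exactly $\Delta_j$ are occupied by $x_j$ itself, since $\abs{\partial x_j} = \Delta_j$ is counted with multiplicity. Hence the second neighborhood of $x_j$ --- the half-edges in $x_j$'s queries that point to agents other than $x_j$ --- consists of precisely $M \coloneqq \Gamma \Delta^\star_j - \Delta_j$ slots, and $\Xi_j$ is the number of these slots whose agent carries hidden bit $1$ (counted with multiplicity, matching the way a query result sums the bits).

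The crux is the \emph{order} in which randomness is revealed. I would \emph{not} condition on the full graph $\G$, because doing so fixes the concrete multiset of second-neighborhood agents, and that multiset may contain repetitions (whose bits are perfectly correlated) while also being subject to the fixed-Hamming-weight constraint on $\SIGMA$ (which renders distinct bits negatively correlated, i.e.\ hypergeometric rather than binomial). Instead I would expose only the incidence pattern of $x_j$ --- which queries contain $x_j$ and with what multiplicities --- which is exactly the information that determines $M$. Because each query samples its $\Gamma$ agents \emph{independently and uniformly with replacement}, conditioning a given non-$x_j$ slot on the event that it differs from $x_j$ leaves it uniform on the remaining $n-1$ agents, independently across slots. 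Thus, given this incidence data, the $M$ second-neighborhood slots are i.i.d.\ uniform on $V \setminus \cbc{x_j}$.

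Finally I would fold in the ground truth. Since $\SIGMA$ is chosen independently of the graph, conditioning additionally on $\SIGMA_j$ fixes the number of bit-one agents among the other $n-1$ agents to be $k - \vecone\cbc{\SIGMA_j = 1}$. Writing the $i$-th second-neighborhood slot as a uniform draw $Y_i$ and its observed bit as $\SIGMA_{Y_i}$, the variables $\SIGMA_{Y_1}, \dots, \SIGMA_{Y_M}$ are independent (being functions of the independent $Y_i$) and each equals $1$ with probability $\frac{k - \vecone\cbc{\SIGMA_j = 1}}{n-1}$. Their sum is therefore exactly $\Bin\bc{M, \frac{k - \vecone\cbc{\SIGMA_j=1}}{n-1}}$, which is the claim.

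The main obstacle is precisely the conditioning subtlety highlighted above: the result is an \emph{exact} binomial only if one resists conditioning on the realized second-neighborhood agents and instead treats them as fresh independent draws. The with-replacement sampling is what makes this clean --- it turns what would otherwise be a hypergeometric (and repetition-correlated) count into a genuine binomial --- so the argument reduces to bookkeeping the number of trials $M$ and verifying per-slot independence, rather than invoking any concentration estimate.
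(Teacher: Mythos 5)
Your proof is correct and takes essentially the same approach as the paper's: count the $\Gamma \Delta^\star_j - \Delta_j$ non-$x_j$ slots in the queries containing $x_j$, and note that with-replacement sampling makes each slot an independent uniform draw from the remaining $n-1$ agents, so the number hitting a bit-one agent is $\Bin\bc{\Gamma \Delta^\star_j - \Delta_j, \frac{k - \vecone\cbc{\SIGMA_j = 1}}{n-1}}$. The paper compresses this into two sentences, while your version additionally spells out the conditioning subtlety (revealing only $x_j$'s incidence pattern rather than the full graph) that the paper's argument implicitly relies on.
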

\begin{proof}
There are $\Gamma \Delta^\star_j - \Delta_j$ (not necessarily distinct) agents $x_i \neq x_j$ that are part of queries in the neighborhood of $x_j$. Each of those agents is one of $k - \vecone \cbc{\SIGMA(j) = 1}$ agents with hidden bit 1 independently out of the $n-1$ remaining agents. 
\end{proof}
If there was no noise the neighborhood sum $\Psi_j$ of agent $x_j$, given the query results vector $\hat \sigma$, formally,
\[ \Psi_j(\hat \sigma) = \sum_{i=1}^m \vecone \cbc{ a_i \in \partial x_j } \hat \sigma_i, \]
is given by $ \Psi_j \sim \Xi_j + \Delta_j \vecone \cbc{\SIGMA_j = 1}$. Indeed, the hidden bit of $x_j$ increases this sum in every of the $\Delta_j$ queries by 1. Conducting more queries increases the concentration properties of $\Xi_j$. Thus, if sufficiently many queries are conducted, it is possible to separate agents with hidden bit one and hidden bit zero by the score \[ \Psi_j - \Erw \brk{ \Xi_j } \sim \Psi_j - \Delta_j^\star \frac{k}{2}\] \whp, see \cite[Theorem 1]{gebhard_2022}.

This main idea can be carried over to the noisy setting but the underlying distributions become much more involved: the queries do no longer report the hidden bits reliably.

\subsection{\pqChannel Model}
In this section we prove \cref{thm_pq} for the \pqchannel model.
The description $\Xi_j$ of the number of agents with hidden bit 1 in the second neighborhood of agent $x_j$ stays clearly correct with respect to the ground-truth $\SIGMA$. However, it does not describe the distribution of the query results anymore. Indeed, every single edge connecting a query and an agent in the underlying graph is, with a certain probability, noisy. Given the number of agents $n_j$ in the second neighborhood of $x_j$, thus $n_j = \Delta^\star_j \Gamma - \Delta_j$, we observe four different \emph{types} $t_{00}, t_{01}, t_{10}, t_{11}$ of hidden bits in the second neighborhood. Here, type $t_{ij}$ is read as \emph{having bit $i$ under $\SIGMA$ and receiving bit $j$ in the query}. Observe that, if the same agent gets queried multiple times, the noise is independent. Therefore, a single agent can appear with hidden bit 0 and hidden bit 1. We define
\begin{equation} \label{eq_pij}
\begin{aligned}
     p_{j}(0,0) &= \bc{ 1 - \frac{k - \vecone \cbc{\SIGMA_j = 1}}{n-1} } (1 - q) \\
     p_j(0,1) & = \bc{1 - \frac{k - \vecone \cbc{\SIGMA_j = 1}}{n-1}} q \\
     p_{j}(1,1) &= { \frac{k - \vecone \cbc{\SIGMA_j = 1}}{n-1} } (1 - p)  \\
    p_j(1,0) & = {\frac{k - \vecone \cbc{\SIGMA_j = 1}}{n-1}} p
    \end{aligned}
\end{equation}
with the interpretation that a single drawn random agent in the second neighborhood of $x_j$ is of type $t_{ik}$ with probability $p_j(i,k)$ independently of everything else. Indeed, $p_{j}(i,k)$ is just the product of the probability that we observe hidden bit $i$ under the ground-truth and the probability of the result under the \pqchannel.
Given $n_j$, we introduce the multinomially distributed random variable \[ \Lambda_j \sim \Mult \bc{ n_j, p_j(0,0), p_j(0,1), p_j(1,0), p_j(1,1) }.\] We interpret $\Lambda_j$ as a random vector with the four entries counting the number of occurrences of type $t_{00}, t_{01}, t_{10}$ and $t_{11}$.
Therefore, the number of observed hidden bits of value one in the second neighborhood of agent $x_j$, previously called $\Xi_j$, is distributed as follows with respect to $\SIGMA$.
\begin{lemma}
The number of observed hidden bits with value one in the second neighborhood of agent $x_j$ under the \pqchannel is distributed as $\Xi^{pq}_j$ where
\[ \Xi^{pq}_j \sim \Lambda_j(0,1) + \Lambda_j(1,1). \]
\end{lemma}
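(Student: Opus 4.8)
The plan is to graft the independent \pqchannel noise onto the noiseless description provided by \cref{cor_xi_nonoise}. First I would condition on $n_j = \Gamma \Delta^\star_j - \Delta_j$ and invoke \cref{cor_xi_nonoise}: the second neighborhood of $x_j$ consists of $n_j$ (not necessarily distinct) agent occurrences $x_i \neq x_j$, and since agents are drawn with replacement, the ground-truth bit attached to each such occurrence is an \emph{independent} Bernoulli variable taking value $1$ with probability $(k - \vecone\cbc{\SIGMA_j = 1})/(n-1)$. Thus, before any noise is applied, the $n_j$ true bits are genuinely i.i.d., even when one physical agent fills several of the occurrences.

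Next I would apply the channel occurrence-by-occurrence. Conditioned on the true bit of an occurrence, the bit reported to the query is the output of an independent copy of the \pqchannel, so it equals $1$ with probability $q$ if the true bit is $0$ and with probability $1-p$ if the true bit is $1$. Because the noise on each edge is independent — in particular, repeated appearances of the same agent receive independent channel realizations — the pair (true bit, reported bit) of each occurrence is an independent categorical draw over the four types $t_{00}, t_{01}, t_{10}, t_{11}$. Its category probabilities factor as (probability of the true bit) times (channel transition probability), which is precisely $p_j(i,k)$ as defined in \eqref{eq_pij}. Hence the vector of the four type counts is $\Mult\bc{n_j, p_j(0,0), p_j(0,1), p_j(1,0), p_j(1,1)}$, i.e.\ it equals $\Lambda_j$ by definition.

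Finally I would read off the observed one-bits. A reported bit equals $1$ exactly when the occurrence has second type-index $1$, i.e.\ when it is of type $t_{01}$ or $t_{11}$; counting these gives $\Xi^{pq}_j = \Lambda_j(0,1) + \Lambda_j(1,1)$, which is the claim. The only genuinely delicate point is the independence underlying the multinomial: one must check that conditioning on $\cG$ (which fixes the $\Delta_j, \Delta^\star_j$ and hence $n_j$) leaves the true bits i.i.d.\ in the sense inherited from \cref{cor_xi_nonoise}, and that the per-edge channel noise is independent of the graph and across edges, including across multiple occurrences of a single agent. Once the product form of $p_j(i,k)$ and this independence are in hand, both the multinomial law and the final sum follow immediately.
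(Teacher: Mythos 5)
Your proposal is correct and follows essentially the same route as the paper: the paper's (much terser) proof likewise invokes \cref{cor_xi_nonoise} for the $n_j$ i.i.d.\ occurrences, uses the product form of $p_j(i,k)$ and the independence of per-edge channel noise (already noted in the text defining $\Lambda_j$) to get the multinomial type counts, and then identifies the observed ones as the occurrences of types $t_{01}$ and $t_{11}$. Your write-up simply makes explicit the conditioning on $\cG$ and the edge-level independence that the paper leaves implicit.
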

\begin{proof}
As in the proof of \cref{cor_xi_nonoise}, exactly $n_j$ (not necessarily distinct) agents are queried. In contrast to the noiseless case, the observed bit is one with probability $p_j(0,1) + p_j(1,1)$ as either bit zero was observed but switched by the channel or bit one was transferred correctly. The remainder of the proof follows the lines of \cref{cor_xi_nonoise}.
\end{proof}
Now, we see that, with respect to the ground-truth $\SIGMA$, the neighborhood sum $\Psi_j(\hat \sigma)$ is distributed as follows.
\begin{lemma}
\label{lem_neighborhoodsum_pq}
Let $\PSI_j = \Psi_j(\hat \SIGMA)$ be the neighborhood sum of agent $x_j$ with respect to the query results vector induced by the ground-truth $\SIGMA$. Then,
\begin{align*}
    \PSI_j = \Psi_j(\hat \SIGMA) \sim \Xi_j^{pq} & + \vecone \cbc{ \SIGMA_j = 1 } \Bin \bc{ \Delta_j, 1-p } \\ & + \vecone \cbc{ \SIGMA_j = 0 } \Bin \bc{ \Delta_j, q }.
\end{align*}
\end{lemma}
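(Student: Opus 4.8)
The plan is to expand the neighborhood sum directly from its definition and to split it across the individual edges incident to the distinct queries containing $x_j$. Under the \pqchannel each edge of the bipartite multi-graph carries an independent signal, so the result of a query $a_i$ is $\hat\sigma_i = \sum_{e \in a_i} S(e)$, where the sum runs over the $\Gamma$ edges incident to $a_i$ and $S(e)$ depends only on the ground-truth of the agent at the other endpoint of $e$. Substituting this into $\Psi_j(\hat\SIGMA) = \sum_{a_i \in \partial^\star x_j} \hat\sigma_i$ rewrites $\PSI_j$ as a sum of the signals $S(e)$ over all $\Gamma\Delta^\star_j$ edges incident to queries in $\partial^\star x_j$.

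First I would partition these edges into the $\Delta_j$ edges whose agent-endpoint is $x_j$ itself and the remaining $n_j = \Gamma\Delta^\star_j - \Delta_j$ edges pointing into the second neighborhood. On the first group every signal depends only on $\SIGMA_j$: it equals $1$ with probability $1-p$ when $\SIGMA_j = 1$ and with probability $q$ when $\SIGMA_j = 0$, independently across edges, so summing the $\Delta_j$ signals yields exactly $\vecone\cbc{\SIGMA_j=1}\Bin(\Delta_j,1-p) + \vecone\cbc{\SIGMA_j=0}\Bin(\Delta_j,q)$. On the second group I would invoke the preceding lemma characterizing $\Xi^{pq}_j$: each of the $n_j$ slots is of type $t_{ik}$ with probability $p_j(i,k)$ independently, so the number of edges reporting an observed one is $\Lambda_j(0,1) + \Lambda_j(1,1) = \Xi^{pq}_j$.

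It then remains to combine the two contributions, and the only substantive point is independence, which I would obtain by conditioning on the degrees $\Delta_j$ and $\Delta^\star_j$ (which in turn fix $n_j$). Given these, the two edge groups are disjoint, the agents filling the second-neighborhood slots are drawn independently, and the \pqchannel noise on distinct edges is independent by construction; hence the self-contribution and the second-neighborhood contribution are independent and their sum has the asserted law. The step I would watch most carefully is the multiplicity bookkeeping: because $x_j$ can occur several times within one query, the self-contribution must be counted with total multiplicity $\Delta_j$ rather than with the number $\Delta^\star_j$ of distinct queries, and keeping track of this is precisely what makes the first binomial carry $\Delta_j$ trials and leaves exactly $n_j = \Gamma\Delta^\star_j - \Delta_j$ edges for the second neighborhood, in agreement with \cref{cor_xi_nonoise}.
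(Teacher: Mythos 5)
Your proposal is correct and follows essentially the same route as the paper's proof: decompose $\PSI_j$ into the self-contribution of $x_j$'s own $\Delta_j$ edges (yielding the $\Bin(\Delta_j,1-p)$ or $\Bin(\Delta_j,q)$ term via the per-edge channel probabilities) plus the second-neighborhood contribution, which is $\Xi_j^{pq}$ by the preceding lemma. Your explicit treatment of the edge-multiplicity bookkeeping ($\Delta_j$ versus $\Delta_j^\star$, leaving $n_j = \Gamma\Delta_j^\star - \Delta_j$ second-neighborhood slots) and of the independence of the two groups merely spells out what the paper's terser argument takes for granted.
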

\begin{proof}
The second neighborhood is distributed as $\Xi_j^{pq}$ by the previous corollary. The agent $x_j$ herself increases the neighborhood sum every time the bit was read correctly if $\SIGMA_j = 1$ while an agent with hidden bit $\SIGMA_j = 0$ increases its neighborhood sum every time the bit was read falsely. The number of times that this happens, independently, is given by the binomial distribution. 
\end{proof}
It is a well known fact that $\Lambda_j(0,1)$ and $\Lambda_j(1,1)$ are negatively associated binomial random variables such that \[\Lambda_j(i,k) \sim \Bin \bc{ n_j, p_j(i,k) }.\]
Therefore, the neighborhood sum itself is, given the status of the hidden bit of an agent, a sum of three negatively associated binomial random variables and its concentration properties can be pinned down by Chernoff bounds (\cref{lem_chernoff} in the appendix). This fact is the main idea of the subsequent proofs. There is a discrepancy of the expected size of the neighborhood sum with respect to the hidden bit of agent $x_j$. As $\Delta_j$ itself is, by construction, a binomial random variable, it is tightly concentrated around its expectation $\Delta$ (see \cref{cor_conc_delta}). Indeed, given $\cR$, we find
\begin{equation}\label{eq_expected_difference}\begin{aligned}
    \MoveEqLeft \Erw  \brk{ \PSI_j \mid \SIGMA_j = 1, \cG } - \Erw \brk{ \PSI_j \mid \SIGMA_j = 0, \cG } \\ & = \Delta  (1 - p - q) + O \bc{ \log^2(n) \sqrt{ \Delta } }.
\end{aligned}
\end{equation}
Thus, we will establish conditions that guarantee that 
\begin{align}
    \label{eq_score_exp} \PSI_j - \Erw \brk{ \Xi_j^{pq} \mid \cG, \cR} > \PSI_i - \Erw \brk{ \Xi_i^{pq} \mid \cG, \cR}
\end{align}
for all $x_j$ with hidden bit one and all $x_i$ with hidden bit zero \whp.

Recall $p_j(0,1)$ and $p_j(1,1)$ from \eqref{eq_pij} and recall furthermore that $\Xi_j^{pq}$ is the sum of two binomially distributed and negatively associated random variables with $n_j = \Delta_j^\star \Gamma - \Delta_j$ trials and success rate $p_j(01)$ and $p_j(1,1)$ respectively. Then, 
\begin{equation}\label{eq_xi_expectation}
\begin{aligned} 
    \Erw \brk{ \Xi_j^{pq} \mid \cG, \cR } & = \bc{\Delta_j^\star \Gamma - \Delta_j} \bc{ p_j(0,1) + p_j(1,1) } \\
    & = (1 + o(1)) \Delta^\star \Gamma \bc{ q + \frac{k}{n} (1 - p - q) }
\end{aligned}
\end{equation}
While given the pooling graph $\vec G$, it is an easy task to calculate $\Erw \brk{ \Xi_j^{pq} \mid \cG, \cR }$, the exact value of $\Xi_j^{pq}$ is not available to any algorithm. Therefore, the ultimate goal is to conduct enough queries such that $\abs{ \Xi_j^{pq} - \Erw \brk{ \Xi_j^{pq} \mid \cG, \cR } }$ is so small that the difference $\PSI_j - \PSI_i$ in \eqref{eq_score_exp} of approximately $ \Delta_j (1 - p - q) + O \bc{ \log(n) \sqrt{\Delta_j} }$ between agents with hidden bit one and hidden bit zero is significantly larger than the difference $\Xi_j^{pq} - \Xi_i^{pq}$ \whp. In this case, the agents with hidden bit one are detectable by \cref{algorithm}.

Chernoff bounds yield {\dense
\begin{align}
    \Pr & \bc{ \abs{ \Xi_j^{pq} - \Erw \brk{ \Xi_j^{pq} \mid \cG, \cR } } > \beta \Delta_j \mid \cG, \cR } \notag \\
    & \leq 2 \Pr \bc{ \Xi_j^{pq} \geq \bc{1 + \frac{\beta \Delta_j}{\Erw \brk{ \Xi_j^{pq} \mid \cG, \cR }} } \Erw \brk{ \Xi_j^{pq} \mid \cG, \cR } \mid \cG, \cR } \notag  \\
    & \leq 2 \exp \bc{ - \frac{\beta^2 \Delta_j^2}{ 2 \Erw \brk{ \Xi_j^{pq} \mid \cG, \cR } } }  \notag \\
    & \leq 2 \exp \bc{ - (1 + o(1)) \frac{\beta^2 \Delta^2}{ 2 \Delta^\star \Gamma \bc{ q + \frac{k}{n} (1 - p - q) } } }. \label{eq_chernoff_pc_prep}
\end{align}}

\noindent Two things need to be established. First, $\PSI_j - \Erw \brk{ \Xi_j^{pq} \mid \cG, \cR }$ must overshoot a certain value for all $k$ agents with hidden bit one. Second, it needs to be undershot by all $n-k$ agents with hidden bit one. Observe that for $\alpha \in (0,1)$, 
\[ \Delta_j q + \alpha \Delta (1 - p - q) = \Delta_j (1-p) - (1 - \alpha) \Delta (1-p-q) \]
interpolates between the expected difference in the neighborhood sum for agents of different state. 

Formally, we require
\begin{align*}
    \PSI_j & - \Erw \brk{ \Xi_j^{pq} \mid \cG, \cR }  <  \Delta q + \alpha \Delta (1 - p - q)
\end{align*}
\whp for all agents with hidden bit zero and
\begin{align*}
    \PSI_j & - \Erw \brk{ \Xi_j^{pq} \mid \cG, \cR }  > \Delta (1-p) - (1 - \alpha) \Delta (1-p-q)
\end{align*}
\whp for all agents with hidden bit one. 

By \cref{eq_expected_difference,eq_chernoff_pc_prep} we have 
{\dense \allowdisplaybreaks
\begin{align}
    \notag \Pr & \big( \PSI_j - \Erw \brk{ \Xi_j^{pq} \mid \cG, \cR \mid \SIGMA_j = 0} \\
    \notag & \qquad \qquad > \Delta q + \alpha \Delta (1 - p - q) \mid \SIGMA_j = 0 \big)\\
    \notag & \leq \Pr  \big( \abs{\Xi_j^{pq} - \Erw \brk{ \Xi_j^{pq} \mid \cG, \cR } } > (\alpha + o(1)) \Delta_j (1 - p - q) \big) \\
    & \leq  2 \exp \bc{ - (1 + o(1)) \frac{\alpha^2 \bc{1-p-q}^2 \Delta^2}{ 2 \Delta^\star \Gamma \bc{ q + \frac{k}{n} (1 - p - q) } } }
    \label{eq_chernoff_pc_1}
\intertext{and}
    \notag \Pr & \big( \PSI_j - \Erw \brk{ \Xi_j^{pq} \mid \cG, \cR \mid \SIGMA_j = 1} \\
    \notag & \qquad > \Delta (1-p) - (1-\alpha) \Delta (1 - p - q) \mid \SIGMA_j = 1 \big)\\
    \notag & \leq \Pr  \big( \abs{\Xi_j^{pq} - \Erw \brk{ \Xi_j^{pq} \mid \cG, \cR } } > (1-\alpha + o(1)) \Delta_j (1 - p - q) \big) \\
    & \leq  2 \exp \bc{ - (1 + o(1)) \frac{(1-\alpha)^2 \bc{1-p-q}^2 \Delta^2}{ 2 \Delta^\star \Gamma \bc{ q + \frac{k}{n} (1 - p - q) } } }.
    \label{eq_chernoff_pc_2}
\end{align}}

\noindent To establish a union bound over all $k$ agents with hidden bit 1 and $n-k$ agents with hidden bit 0, it therefore suffices to pin down $\alpha = \alpha(p,q)$ and the number of queries conducted parametrized by $\Delta = m \Gamma / n$, such that

\begin{align} \label{eq_condition_k}
    \frac{(1-\alpha)^2 \bc{1-p-q}^2 \Delta^2}{ (2 + o(1)) \Delta^\star \Gamma \bc{ q + \frac{k}{n} (1 - p - q) } } > \log(k)
\end{align}
and
\begin{align} \label{eq_condition_n}
    \frac{\alpha^2 \bc{1-p-q}^2 \Delta^2}{ (2 + o(1)) \Delta^\star \Gamma \bc{ q + \frac{k}{n} (1 - p - q) } } > \log(n)
\end{align}
As the l.h.s.\ of \eqref{eq_condition_n} is increasing in $\alpha$ while the l.h.s.\ of \eqref{eq_condition_k} is decreasing in $\alpha$, there is exactly one $\alpha \in (0,1)$ in which both conditions equal and are therefore satisfied as weakly as possible.
Observe that for $p = q = 0$ \cref{eq_condition_k,eq_condition_n} recover the conditions of \cite{gebhard_2022}. 

Recall that $\Delta = m \Gamma / n$, $\Delta^\star = 2 (1 - \exp\bc{-1/2}) \Delta$, $\Gamma = n/2$ and $m = d k \log(n)$.
We distinguish the two cases $k = n^{\theta}$ with $\theta \in (0,1)$ and $k = \zeta n$ with $\zeta \in (0,1)$. 
\paragraph{Sublinear Case}
The condition \eqref{eq_condition_k} now reads
\begin{align} \label{eq_condition_k_sublinear}
    \frac{(1-\alpha)^2 \bc{1-p-q}^2 \Delta^2}{ (2 + o(1)) \Delta^\star \Gamma \bc{ q + \frac{k}{n} (1 - p - q) } } - \theta \log(n) > 0. 
\end{align}

The denominator's behavior changes dramatically depending on $q$, the probability to falsely read a zero as a one. We distinguish two cases.

\paragraph{Case $q = 0$ (Z-channel)}
In this case, the denominator in the conditions reads $(2 + o(1)) (1 - p) \Delta^\star \Gamma \frac{k}{n}$.
Therefore, we need to establish
\begin{align} \label{eq_condition_n_sub11}
    \frac{\alpha^2 \bc{1-p} d}{ (4 + o(1)) ( 1 - \exp( - 1/2) ) } - 1 > 0
\end{align}
and
\begin{align} \label{eq_condition_k_sub11}
    \frac{(1-\alpha)^2 \bc{1-p} d}{ (4 + o(1)) ( 1 - \exp( - 1/2) ) } - \theta > 0 .
\end{align}
We let $\gamma = (1 - \exp \bc{-1/2})$ for brevity. The conditions in \cref{eq_condition_n_sub11,eq_condition_k_sub11} are satisfied as weakly as possible if
\begin{align} \label{eq_condition_alpha_sub11}
    \bc{ \alpha^2 - (1 - \alpha)^2 } = \frac{ (4 + o(1)) \gamma (1 - \theta)}{d (1 - p)}. 
\end{align}
 Then the optimal solution is \[ \alpha = (1 + o(1))\frac{ 4 \gamma (1-\theta) + d(1-p)}{2d(1-p)}.\]
A short calculation that involves calculation of the roots of a parabola verifies that \cref{eq_condition_n_sub11} evaluated at this point yields
\begin{align*}
    d > (4 \gamma + o(1)) \frac{ \bc{1 + \sqrt{\theta}}^2 }{1-p}.
\end{align*}
(It is straightforward to check that the choices of $\alpha$ and $d$ satisfy the equation.)
Therefore, \cref{algorithm} reconstructs $\SIGMA$ correctly \whp if
\[ m \geq (1 + \eps) 4 \gamma \frac{ \bc{1 + \sqrt{\theta}}^2 }{1-p} k \log(n) \]
queries are conducted as claimed in \cref{thm_pq}. Observe that this directly extends the results by \cite{gebhard_2022}.

\paragraph{Case $q > 0$ (general noisy channel)}
The denominator in the conditions in \cref{eq_condition_k_sublinear} reads $(2 + o(1)) \Delta^\star \Gamma q$. Again, let $\gamma = (1 - \exp \bc{-1/2})$. 
We need to establish
\begin{align} \label{eq_condition_n_sub1}
    \frac{\alpha^2 \bc{1-p-q}^2 d k}{ (4 \gamma + o(1))  n q } - 1 &> 0
\intertext{and}
\label{eq_condition_k_sub1}
   \frac{(1-\alpha)^2 \bc{1-p-q}^2 d k}{ (4 \gamma + o(1)) n q } - \theta &> 0.
\end{align}
As the nominator scales in $d k$ and the denominator in $n$, the number of queries conducted (given by $d k \log(n)$) needs to be substantially larger. We therefore set \[
d = c n  k^{-1} \] for some constant $c > 0$. Then \cref{eq_condition_n_sub1,eq_condition_k_sub1} are satisfied as weakly as possible if
\begin{align*}
    \bc{ \alpha^2 - (1 - \alpha)^2 } = \frac{ (4 \gamma q + o(1)) (1 - \theta)}{c (1 - p - q)^2} \text{ s.t. } \alpha \in (0,1)
\end{align*}
in which $\alpha$ turns out to be
\begin{align*}
    \alpha = \frac{1 + o(1)}{2} \frac{c (1 -p -q)^2 + 4 \gamma q (1 - \theta)}{c (1 - p - q)^2}.
\end{align*}
Again, we plug the solution for $\alpha$ into \cref{eq_condition_n_sub1} and get
\begin{align*}
    \frac{dk}{n} = c > (4 \gamma q + o(1)) \frac{ \bc{1 + \sqrt{\theta}}^2 }{ \bc{1-p-q}^2 }
\end{align*}
Therefore, our algorithm reconstructs $\SIGMA$ correctly \whp if
\[ m \geq (4 \gamma + \eps)  \frac{ q \bc{1 + \sqrt{\theta}}^2 }{ \bc{1-p-q}^2 } n \log(n)\]
queries are conducted as claimed in \cref{thm_pq}.

\medskip

\paragraph{Linear Case}
Recall $k = \zeta n$, $\Gamma = \frac{n}{2}$, $\Delta = \frac{m}{2}$, $\gamma = (1 - \exp(-1/2))$ and $\Delta^\star = \gamma m$. Finally, $m = d k \log(n) = d \zeta n \log(n)$. Therefore, the conditions in \cref{eq_condition_k,eq_condition_n} read
\begin{align} \label{eq_condition_k_linear}
    \frac{(1-\alpha)^2 \bc{1-p-q}^2 d \zeta}{ (4 \gamma + o(1))\bc{ q + \zeta (1 - p - q) } } - 1 > 0
\end{align}
and
\begin{align} \label{eq_condition_n_linear}
    \frac{\alpha^2 \bc{1-p-q}^2 d \zeta}{ (4 \gamma + o(1)) \bc{ q + \zeta (1 - p - q) } } - 1 > 0.
\end{align}
By the same argument as before, the optimal $\alpha$ is given if the l.h.s.\ of \eqref{eq_condition_k_linear} and \eqref{eq_condition_n_linear} coincide, which is the case if
\begin{align} \label{eq_condition_alpha_linear}
    \alpha^2 = (1 - \alpha)^2.
\end{align}
Therefore, the optimal $\alpha$ is $\alpha = 1/2$. We conclude by \eqref{eq_condition_n_linear} that setting
\begin{align*}
    d > 16 \gamma \frac{q + (1 - p - q) \zeta}{(1 - p - q)^2 \zeta}
\end{align*}
suffices to satisfy the requirements.
Thus, our algorithm reconstructs $\SIGMA$ correctly \whp if
\[ m \geq (16 \gamma + \eps) \frac{q + (1 - p - q) \zeta}{(1 - p - q)^2 } n \log(n)\]
queries are conducted as claimed in \cref{thm_pq}.

\subsection{Noisy Query Model}
In this section we prove \cref{thm_noisy} for the noisy query model.
In this model, the hidden bits are measured correctly per se. However, the outcome of one query, the sum of the hidden bits, is exhibited to Gaussian fluctuations. More precisely, for each query $a$ we have
\begin{align*}
    \hat \SIGMA_a = \sum_{x \in \partial a} \SIGMA_x + \cW_a,
\end{align*}
where $\cW_1, \ldots, \cW_m \sim \cN(0, \lambda^2)$ are independent Gaussians with mean $0$ and variance $\lambda^2$.
The proof follows along the same lines as in the \pqchannel. Recall that $\Xi_j = \Xi_j^{00}$ is the number of agents with bit one in the (distinct) second neighborhood of agent $x_j$. 
Then \cref{lem_neighborhoodsum_pq} directly implies the following corollary if $p = q = 0$.
\begin{corollary}
\label{cor_neighborhoodsum_noise}
Let $\PSI_j = \Psi_j(\hat \SIGMA)$ be the neighborhood sum of agent $x_j$ with respect to the query results vector induced by the ground-truth $\SIGMA$. Then, under the noisy query model, 
\begin{align*}
    \PSI_j = \Psi_j(\hat \SIGMA) = \Xi_j +  \vecone \cbc{ \SIGMA_j = 1 } \Delta_j + \sum_{i = 1}^m \vecone \cbc{x_j \in \partial a_i} \cW_a.
\end{align*}
\end{corollary}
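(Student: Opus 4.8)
The plan is to read off the decomposition directly from the definition of the neighborhood sum, substituting the noisy-query measurement formula and then specializing \cref{lem_neighborhoodsum_pq} to $p=q=0$ for the signal part. A useful observation at the outset is that, unlike \cref{lem_neighborhoodsum_pq}, the present statement is a \emph{pathwise} identity once the graph $\G$ and ground-truth $\SIGMA$ are fixed, so the ``$=$'' can be taken literally rather than only in distribution. First I would expand $\PSI_j = \sum_{i=1}^m \vecone\cbc{a_i \in \partial x_j}\hat\SIGMA_i$ using $\hat\SIGMA_i = \sum_{x \in \partial a_i}\SIGMA_x + \cW_i$; separating the two summands yields a signal part $\sum_{i=1}^m \vecone\cbc{a_i \in \partial x_j}\sum_{x\in\partial a_i}\SIGMA_x$ and a noise part $\sum_{i=1}^m \vecone\cbc{a_i \in \partial x_j}\cW_i$. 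The noise part is already the third term of the claim, since the indicator restricts the sum to the distinct queries incident to $x_j$, matching how \cref{algorithm} aggregates one message per query in $\partial^\star x_j$.

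Next I would identify the signal part with $\Xi_j + \vecone\cbc{\SIGMA_j=1}\Delta_j$. This is exactly \cref{lem_neighborhoodsum_pq} at $p=q=0$: the factor $\Bin(\Delta_j,1-p)$ degenerates to the constant $\Delta_j$, the factor $\Bin(\Delta_j,q)$ vanishes, and $\Xi_j^{pq}=\Lambda_j(0,1)+\Lambda_j(1,1)$ collapses to $\Xi_j$ because $p_j(0,1)=0$ while $p_j(1,1)$ reduces to the noiseless second-neighborhood success probability $(k-\vecone\cbc{\SIGMA_j=1})/(n-1)$. Equivalently, and more transparently for a pathwise identity, I would count agent-slots directly: the $\Delta^\star_j$ distinct queries incident to $x_j$ contribute $\Gamma\Delta^\star_j$ slots in total, of which $\Delta_j$ are occupied by $x_j$ itself (contributing $\Delta_j\SIGMA_j = \Delta_j\vecone\cbc{\SIGMA_j=1}$) and the remaining $n_j=\Gamma\Delta^\star_j-\Delta_j$ are occupied by second-neighborhood agents whose realized one-count is $\Xi_j$ as in \cref{cor_xi_nonoise}.

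Combining the two parts gives the stated identity. I do not expect a genuine obstacle: the Gaussian perturbation is additive and attached to whole queries, so it does not interact with the bit-sum, and the argument is essentially bookkeeping. The only point that warrants care is the distinct-versus-multiplicity accounting, namely that a query containing $x_j$ with multiplicity larger than one still contributes its noise $\cW_i$ exactly once; this is precisely what the membership indicator $\vecone\cbc{a_i\in\partial x_j}$ enforces and what the aggregation over $\partial^\star x_j$ in \cref{algorithm} realizes.
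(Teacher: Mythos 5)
Your proposal is correct and follows essentially the same route as the paper, which obtains the corollary by specializing \cref{lem_neighborhoodsum_pq} to $p=q=0$ and appending the additive Gaussian term $\sum_{i=1}^m \vecone\cbc{x_j \in \partial a_i}\cW_i$. Your additional pathwise slot-counting argument and the remark on distinct-versus-multiplicity accounting of the noise are correct refinements of the same idea, not a different proof.
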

For the following proofs, we define $\cX_j$, the Gaussian noise in the neighborhood sum of agent $x_j$, as
\begin{align*}
    \cX_j = \sum_{i = 1}^m \vecone \cbc{ x_j \in \partial a_i} \cW_a \quad \text{ with} \quad 
    \cX_j \sim \cN \bc{ 0, \lambda^2 \Delta^\star_j }.
\end{align*}
We prove the two parts of \cref{thm_noisy} individually.
\medskip

\paragraph{Algorithmic Achievability}
By \cref{thm_pq}, 
\[ m \geq (4 \gamma + \eps) \bc{1 + \sqrt{\theta}}^2  k \log(n)\]
queries suffice in the sublinear case and, respectively,
\[  m \geq (16 \gamma + \eps) \frac{q + (1 - p - q) \zeta}{(1 - p - q)^2 } n \log(n) \]
queries suffice in the linear case to distinguish the neighborhood sum of all agents with hidden bits 1 and 0 by the difference of $\Delta_j$ due to the hidden bit. Therefore, if the noise $\cX_j$ in the neighborhood sum is of order $o \bc{\Delta_j}$ \whp for all agents, it is negligible and the same bounds hold.

By the previous discussion, the standard deviation of $\cX_j$ is given by 
\[ \nu_j = \lambda \sqrt{\Delta^\star_j}. \]
Let $\tau_n = \tau_n(\lambda) = o(1)$ be arbitrarily slowly vanishing. Then the tail bounds for the Gaussian distribution (\cref{lem_chernoff_gaussian} in the appendix) show that
{\dense
\begin{equation} \label{eq_tailbound_1}
\begin{aligned}
    \Pr  \bc{ \abs{\cX} \geq \Delta_j \tau_n \mid \cG, \cR} &= \Pr \bc{ \abs{\cX} \geq \frac{\tau_n \sqrt{\Delta_j}}{ \lambda \sqrt{2 \gamma} } \nu_j \mid \cG, \cR } \\
    & = \Theta\! \bc{ \exp \bc{ - (1 + o(1)) \frac{\tau_n^2 \Delta}{4 \gamma \lambda^2 } } } .
    \end{aligned}
\end{equation}}

\noindent Therefore, $\smash{\cX_j = o \bc{ \Delta_j }}$ \whp if 
$\smash{ \lambda = o \bc{ {\sqrt{\Delta}}/{\sqrt{\log n}} } }$, since in this case we can choose $\tau_n = o(1)$ such that the r.h.s.\ of \eqref{eq_tailbound_1} becomes $o \bc{n^{-1}}$. This implies the first part of \cref{thm_noisy} as $m = \Theta(\Delta_j)$ \whp by a union bound over all $n$ agents.

\medskip

\paragraph{Algorithmic Failure}
With the same argument, it is possible to pin down the magnitude of noise from which on \cref{algorithm} fails with positive probability for any number of queries conducted. We need to establish that for $\lambda^2 = \Omega  \bc{ \Delta_j } $ the noise dominates the difference in the neighborhood sum that can be observed between agents of different state.

Again, the Gaussian tail bounds show
\begin{equation}
\begin{aligned} \label{eq_tailbound_2}
    \Pr  \bc{ \cX_j \geq \Delta_j \mid \cG, \cR} &= \Pr \bc{ \cX \geq \frac{\sqrt{\Delta_j}}{ \lambda \sqrt{2 \gamma} } \nu_j \mid \cG, \cR} \\
     & = \Theta \bc{ \exp \bc{ - \frac{\Delta}{4 \gamma \lambda^2 } } } = \Theta(1).
\end{aligned}
\end{equation}

\noindent Let $A = \abs{ \cbc{ j: \cX_j \geq \Delta_j } }$, then $\Erw \brk{A \mid \cG, \cR} = \Theta(n)$ and by the reverse Markov inequality, we have for $0 < t < \Erw \brk{A \mid \cG, \cR}/2$ that
\begin{align*}
 \Pr & \bc{ A > t \mid \cG, \cR} \geq \frac{\Erw \brk{A \mid \cG, \cR} - t}{n - t} = \Theta(1).  
\end{align*}
This implies the second part of \cref{thm_noisy}.

\section{Simulations}\label{sec:simulations}

\begin{figure*}[p]
\begin{minipage}{3.5in}
\input{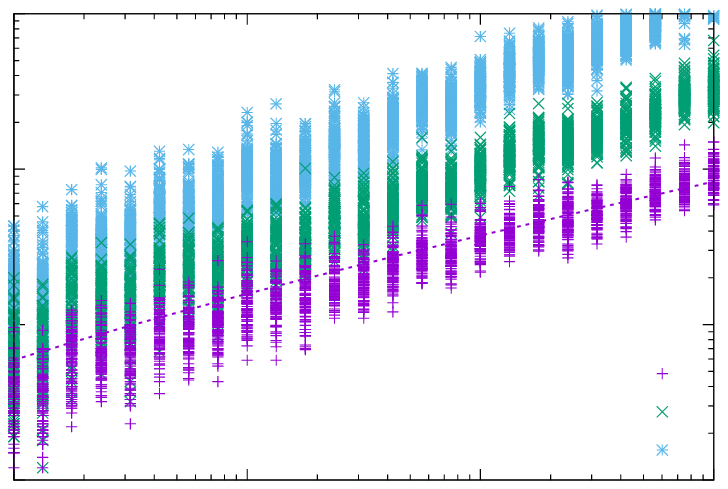}
\caption{The plots show the required number of queries for the Z-channel ($q = 0)$ with $\theta = 0.25$. The dashed line shows our theoretical bound for $p = 0.1$.}
\label{fig:required-number-of-queries-noisy-channel}
\end{minipage}\hfill
\begin{minipage}{3.5in}
\input{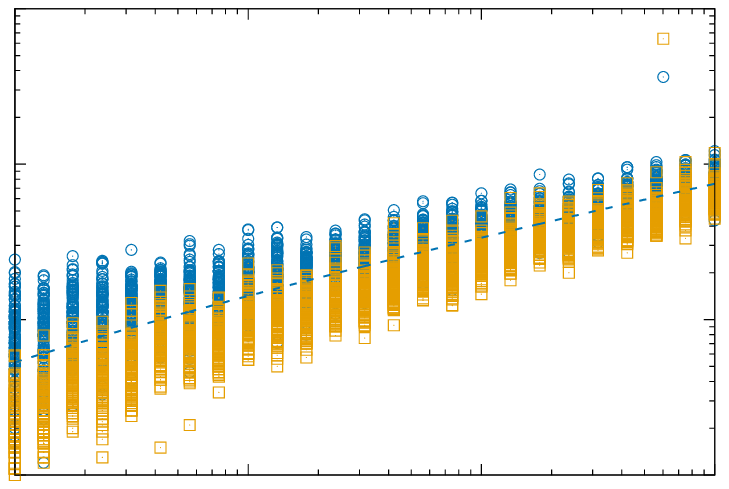}
\caption{The plots show a comparison of the required number of queries in the noisy query model and the required number of queries without noise for $\theta = 0.25$.}
\label{fig:required-number-of-queries-noisy-pool}
\end{minipage}

\vspace{1cm}

\begin{minipage}{3.5in}
\input{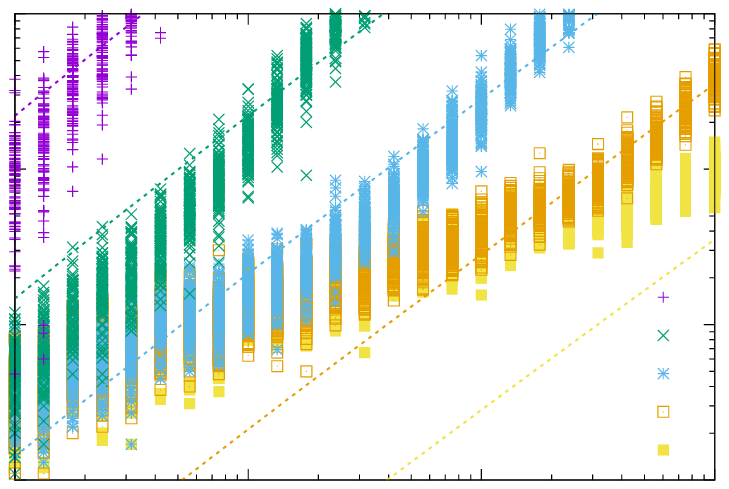}
\caption{The plot shows the required number of queries for the noisy channel model with $p = q = 10^{-1}, 10^{-2}, 10^{-3}, 10^{-4}, 10^{-5}$ for $\theta = 0.25$.}
\label{fig:required-number-of-queries-noisy-channel-symmetric}
\end{minipage}\hfill
\begin{minipage}{3.5in}
\input{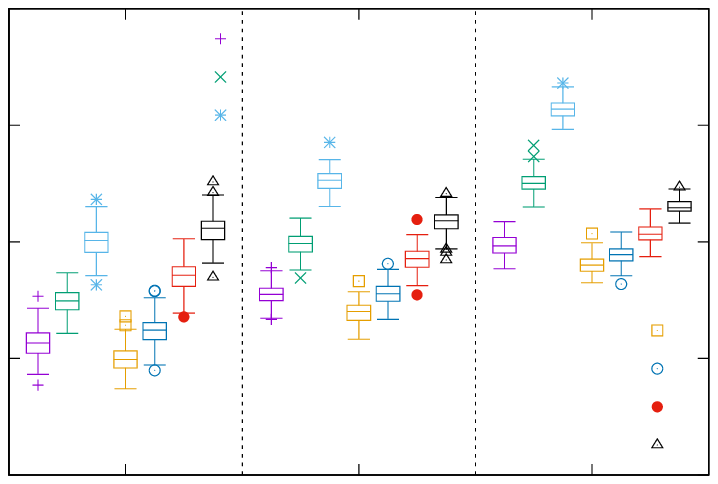}
\caption{The figures shows boxplots for the data presented in \cref{fig:required-number-of-queries-noisy-channel,fig:required-number-of-queries-noisy-pool} for $n = 10^3, 10^4, 10^5$ (and additional noise levels for  \cref{fig:required-number-of-queries-noisy-pool}).}
\label{fig:required-number-of-queries-boxplot}
\end{minipage}

\vspace{1cm}

\begin{minipage}{3.5in}
\input{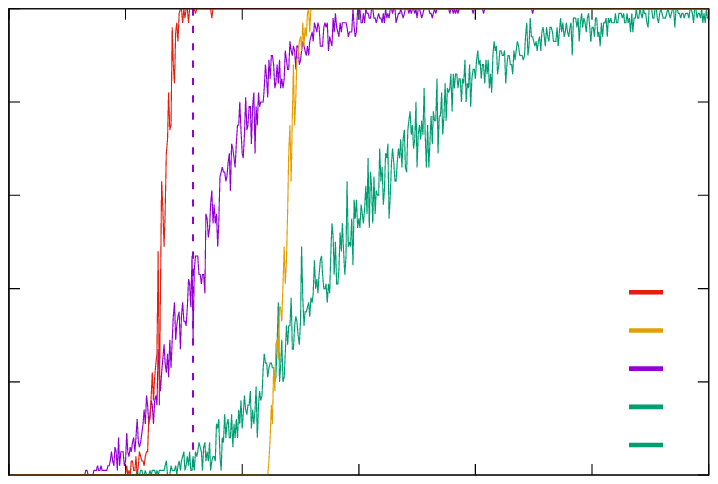}
\caption{The plot shows the success rate for $n = 1000$ and $p = 0.1, 0.3, 0.5$ in the Z-channel model. The dashed line indicates the theoretical bound for $p = 0.1$.}
\label{fig:success-probability}
\end{minipage}
\hfill
\begin{minipage}{3.5in}
\input{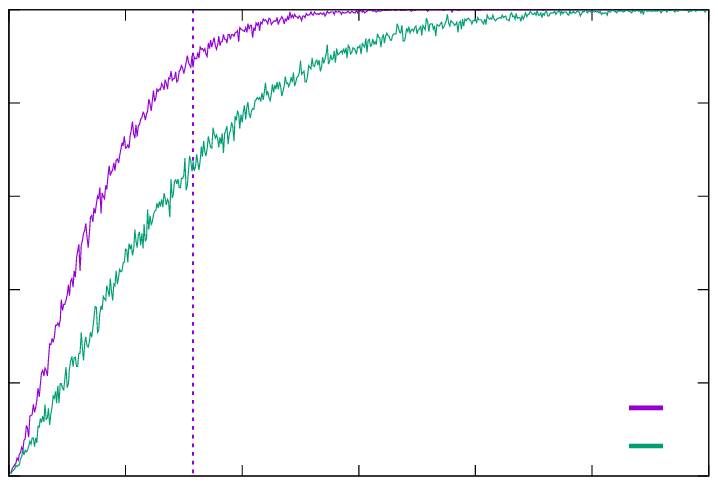}
\caption{The plot shows the \emph{overlap} (the average fraction of correctly identified one bits) for $n = 1000$ and $p = 0.1, 0.3, 0.5$ in the Z-channel model. The dashed line indicates the theoretical bound for $p = 0.1$.}
\label{fig:overlap}
\end{minipage}

\end{figure*}

In this section we present simulation results for our algorithm.
Our simulation software is 
implemented in the \CC{} programming language.
It first generates the random pooling scheme by constructing a random bipartite graph.
Then it simulates the interactions between agents and query nodes by computing the agents' scores as defined in \cref{algorithm}.
As a source of randomness we use the Mersenne Twister \texttt{mt19937\_64} provided by the \CC{11} \texttt{\textless random\textgreater{}} library.
Our simulations have been carried out on a machine with two Intel(R) Xeon(R) E5-2630 v4 CPUs with 128 GiB of memory running the Linux 5.13 kernel.
The simulation software and all related tools will be made publicly available upon publication of this paper.

\smallskip

\paragraph{Required Number of Queries}
We start our empirical analysis with the number of queries that are required to reconstruct the data in both noise models, the noisy channel model and the noisy query model.
We present our data for the fixed value of $\theta = 0.25$, however, extensive simulations comprising multiple different values of $\theta$ have shown consistent results.

In \cref{fig:required-number-of-queries-noisy-channel} we consider the Z-channel model where we assume that the $1$ bits flip with probability $p$ when read by a query node.
We plot the \emph{required number of queries} (see below for implementation details) for varying values of $n$ on the $x$-axis, three different noise levels $p = 0.1, 0.2, 0.3$ and $\theta = 0.25$.
The dashed line indicates our theoretical bounds from \cref{thm_pq} for $p = 0.1$ and $\epsilon = 0.05$.
We remark that for $n \in [10^2, 10^5]$ our simulation results align well with our theoretical findings for small and moderate error probabilities up to $p = 0.1$.
Unfortunately this is not entirely the case for larger (but arguably less realistic) values of $p$, which we also present in \cref{fig:required-number-of-queries-noisy-channel}.
In our asymptotic analysis we use that $\log^2(n) \sqrt{\Delta \cdot (1-p) }$ is much smaller than $\Delta (1-p)$ (see \cref{eq_expected_difference} in the proof of \cref{lem_neighborhoodsum_pq}).\footnote{We emphasize that $2 \sqrt{\Delta \cdot (1-p) \log(k)} \ll \Delta(1-p)$ would also suffice analytically if the Chernoff bound was applied in its strongest variant.} 
While this is always the case for sufficiently large values of $n$, the required number of agents to fulfill this property is well beyond any practical input sizes.

In \cref{fig:required-number-of-queries-noisy-pool} we consider the noisy query model where we assume that each query returns a random variate sampled from a normal distribution.
We compare the required number of queries under noise ($\lambda = 2$) to the required number of queries without noise.
\Cref{fig:required-number-of-queries-boxplot} shows additional data for both the Z-channel model (for $p = 0.1, 0.3, 0.5$) and the noisy query model (for $\lambda = 0, 1, 2, 3$) in the form of box plots for $n = 10^3$, $n = 10^4$, and $n = 10^5$.

Finally, in \cref{fig:required-number-of-queries-noisy-channel-symmetric} we consider the general noisy channel with $p = q$.
We plot the required number of queries for symmetric error rates $p = q = 10^{-1}, \dots, 10^{-5}$.
Note that in this plot one can observe the transition between two regimes as predicted by our theoretical results (see remark after \cref{thm_pq}).
Indeed, consider the case $p = q = 10^{-3}$ shown in blue.
Up to roughly $n = 3000$, the expression $k/n$ dominates, while for larger values of $n$ the error probability $q$ dominates, resulting in a noticeably steeper ascend of the required number of queries.
The dashed lines show our theoretical bounds for this latter setting.

\smallskip

\paragraph{Implementation Details}
In our simulation, we compute the required number of queries reported in \cref{fig:required-number-of-queries-noisy-channel,fig:required-number-of-queries-noisy-pool,fig:required-number-of-queries-noisy-channel-symmetric,fig:required-number-of-queries-boxplot} as follows.
First we initialize the ground truth according to $n$ and $\theta$.
Then we simulate one query node after the other in a sequential manner.
Each query node samples a number of agents uniformly at random with replacement.
In the \pqchannel model, the simulation software applies the random bit flips when computing the sum.
In the noisy query model, we first compute the exact sum $\mu$.
Then we draw a random variate according to a normal distribution $N(\mu, \lambda)$ where $\lambda$ is the noise parameter of the model.
Finally, we updates the values of $\Delta^*$ and $\Psi^*$ accordingly. Note that these steps perform a faithful simulation of the distributed system.

Our simulation uses a permutation on $[n]$ to store an order among the agents. After adding each query node, this permutation is updated to reflect the new values of $\Delta^*$ and $\Psi^*$.
Our simulation terminates once the ground truth can be reconstructed exactly; this involves a check whether all agents have been correctly identified, and whether there is a clear separation between the scores of the $0$ agents and the $1$ agents.

\smallskip

\paragraph{Success Probabilities and Comparison with AMP}
In \cref{fig:success-probability} we analyze the success probabilities of the distributed reconstruction with our greedy algorithm, and we compare it with AMP.
We consider a fixed numbers of $n = 1000$ agents.
For each value of $m$ we have conducted $100$ independent simulation runs. In the plot we show the relative number of successful reconstructions of all agents' hidden bits.
The dashed line shows our theoretical bounds from \cref{thm_pq} for $p = 0.1$ and $\epsilon = 0.1$.

In \cref{fig:overlap} we analyze the \emph{overlap}. It is defined as the fraction of agents with bit 1 that are correctly identified.
Again, $100$ independent simulation runs have been conducted for each value of $m$.
For moderate error probabilities our data indicate a substantial overlap, even if the exact reconstruction is still quite unlikely.
Indeed, consider the threshold from our theoretical results.
From \cref{fig:success-probability} we observe a success rate of perfect reconstructions of about 40\%, while from \cref{fig:overlap} we obtain that on average almost 90\% of the one-agents have been correctly identified.
We remark that this property hints at the practical applicability of our reconstruction algorithm, where often a small probability of misclassification is acceptable.

\section{Conclusion} \label{sec:discussion}
It is conjectured that AMP might be optimal in reconstruction problems like the pooled data problem studied in this contribution \cite{alaoui_2017, donoho_l1, donoho_amp, donoho_amp2}. While the algorithm has a distributed touch as it can be implemented such that single network nodes (agents or queries) exchange only messages with their neighbors, the algorithm requires an information flow through the whole communication network within multiple rounds. In contrast, our greedy approach is completely distributed and requires only one information exchange per network node. It is therefore not surprising that the centralized AMP algorithm outperforms \cref{algorithm} in certain settings. Nevertheless, our simulation data suggests that both algorithms exhibit a similar phase transition between failing most of the time and succeeding most of the time. The width of the phase transition window, however, is much smaller for AMP. This might be due to the following observation.

AMP is initialized with $\sigma^{(0)} = 0$ and computes in the first round $\sigma^{(1)} = \eta_1 \bc{\vec A^T \hat \SIGMA }$. The term $\vec A^T \hat \SIGMA$ corresponds to our \emph{neighborhood sum} (with the difference that multi-edges are counted multiple times).

It follows that the information that AMP can use after exactly one update step is the same as in \cref{algorithm}. If there is enough information in the neighborhood sum such that a good portion of hidden bits can be estimated correctly, the iterative procedure of AMP seems to allow the recovery of the remaining few mistakes. This conjecture is supported by the fact that \cref{algorithm} outputs an estimate with a high overlap with the ground-truth $\SIGMA$ with the same number of queries with which AMP succeeds already quite often. If, on the other hand, there is not enough information in the neighborhood sum, \cref{algorithm} produces an estimate with small overlap with $\SIGMA$. In this case, AMP seems to fail as well. This might be because the first estimate $\sigma^{(1)}$ guides AMP into the attraction basin of a false fixed-point.

Overall, we analyzed a simple, distributed and fast reconstruction algorithm for the pooled data problem that exhibits similarities with the first step of AMP. It turns out that our distributed reconstruction algorithm is robust against noise and stays reliable. We give exact and rigorous asymptotic achievability bounds that match our simulation data. An intriguing open question is whether a two-step algorithm that locally tries to correct errors can be analyzed rigorously and performs even better.

\vfill\null


\bstctlcite{IEEEexample:BSTcontrol}
\bibliography{bibliography}

\appendix

\begin{theorem}[Chernoff Bound for Negatively Associated Random Variables \cite{DBLP:books/daglib/0025902,Chen2019}] \label{lem_chernoff}
Let $(X_1, \ldots, X_n)$ be a sequence of negatively associated Bernoulli random variables such that $X_i \sim \Be(p_i)$. Let $X = \sum_{i=1}^n X_i$ and $p = n^{-1} \sum-{i=1}^n p_i$. Then for any $\eps >0$, 
\begin{align*}
    & \Pr\bc{ X \geq (1 + \eps) \Erw \brk{X}} \leq \exp\bc{-\frac{\eps^2}{2 + \eps} \Erw\brk{X}} \qquad \text{and}  \\
    & \Pr\bc{X \leq (1 - \eps) \Erw\brk{X}} \leq \exp\bc{-\frac{\eps^2}{2} \Erw\brk{X}}.
\end{align*}
\end{theorem}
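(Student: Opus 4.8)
The plan is to reduce this to the classical Chernoff bound for \emph{independent} Bernoulli variables, using the only property of negative association that the exponential method actually needs: a one-sided factorization of the moment generating function. Recall that a family $(X_1,\dots,X_n)$ is negatively associated if for every pair of disjoint index sets and every pair of coordinatewise non-decreasing functions applied to the corresponding subvectors, the two resulting random variables are negatively correlated. The standard consequence I would invoke is that for non-decreasing functions $f_1,\dots,f_n$ one has $\Erw\bc{\prod_i f_i(X_i)} \le \prod_i \Erw\bc{f_i(X_i)}$, i.e.\ the expectation of the product is dominated by the product of expectations.

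First I would apply the exponential Markov inequality. For the upper tail and any $t>0$,
\[
\Pr\bc{X \ge (1+\eps)\Erw\brk{X}} \le \exp\bc{-t(1+\eps)\Erw\brk{X}}\,\Erw\bc{\exp(tX)}.
\]
Since $x \mapsto \exp(tx)$ is non-decreasing for $t>0$, negative association yields $\Erw\bc{\exp(tX)} = \Erw\bc{\prod_i \exp(tX_i)} \le \prod_i \Erw\bc{\exp(tX_i)}$, which is exactly the factorization that holds with equality in the independent case. From here the argument is verbatim the classical one: bound each Bernoulli factor by $\Erw\bc{\exp(tX_i)} = 1 + p_i(\eul^t - 1) \le \exp\bc{p_i(\eul^t-1)}$, multiply to get $\exp\bc{\Erw\brk{X}(\eul^t-1)}$, and minimize the resulting bound $\exp\bc{\Erw\brk{X}(\eul^t - 1 - t(1+\eps))}$ over $t>0$ at $t = \log(1+\eps)$. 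The elementary inequality $(1+\eps)\log(1+\eps) - \eps \ge \eps^2/(2+\eps)$ then delivers the stated upper-tail bound.

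For the lower tail I would run the same computation with a negative exponent. Writing $\Pr\bc{X \le (1-\eps)\Erw\brk{X}} = \Pr\bc{-X \ge -(1-\eps)\Erw\brk{X}}$ and using $\Erw\bc{\exp(-tX)} \le \prod_i \Erw\bc{\exp(-tX_i)}$ for $t>0$, the per-term bound and optimization at $t = -\log(1-\eps) > 0$ give the factor $\exp\bc{-\eps^2\Erw\brk{X}/2}$, after invoking $(1-\eps)\log(1-\eps)+\eps \ge \eps^2/2$.

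The only genuine obstacle is the first reduction: justifying the one-sided MGF factorization from the definition of negative association. The subtlety lives in the lower-tail case, because $x \mapsto \exp(-tx)$ is \emph{decreasing}, so the non-decreasing-functions form of the inequality does not apply directly. The clean way around this is to note that applying a coordinatewise non-increasing map to each coordinate sends a negatively associated family to another negatively associated family; the factorization for products of non-increasing functions therefore follows from the non-decreasing case. Once this closure lemma is granted --- and it is precisely the content of the cited references --- everything else is the textbook Chernoff calculation and uses independence nowhere.
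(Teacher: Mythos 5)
The paper never proves this statement: it appears only in the appendix as an imported auxiliary result, attributed to the cited references (Dubhashi--Panconesi and Chen), so there is no internal proof to compare against. Your argument is correct and is precisely the standard proof found in those references: the Chernoff--Hoeffding exponential-moment method uses independence only through sub-multiplicativity of the moment generating function, negative association supplies $\Erw\brk{\eul^{tX}}\le\prod_i\Erw\brk{\eul^{tX_i}}$ for $t>0$, and the lower tail is handled exactly as you say, via closure of negative association under coordinatewise non-increasing maps (equivalently, applying the non-decreasing case to $(-X_1,\dots,-X_n)$), after which the optimization at $t=\log(1+\eps)$, resp.\ $t=-\log(1-\eps)$, and the elementary inequalities $(1+\eps)\log(1+\eps)-\eps\ge\eps^2/(2+\eps)$ and $(1-\eps)\log(1-\eps)+\eps\ge\eps^2/2$ give the stated bounds. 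One precision you should add: the product inequality $\Erw\bc{\prod_i f_i(X_i)}\le\prod_i\Erw\bc{f_i(X_i)}$ holds for non-decreasing \emph{non-negative} $f_i$ (the inductive step needs the partial products to remain coordinatewise monotone, which can fail if the $f_i$ take negative values); since you only ever apply it to exponentials this is harmless, but the lemma should be stated with that hypothesis.
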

A similar bound holds for Gaussian random variables. Furthermore, Mill's ratio yields a fitting lower bound (\cite{duembgen2010bounding} or \cite[Section 7.1]{feller1}). 
\begin{theorem}[Tail bound for Gaussian random variables] \label{lem_chernoff_gaussian}
Let $X \sim \cN(0, \lambda^2)$ be a Gaussian random variable with mean zero and variance $\lambda^2$. Then, for any $y > 0$, we have
\begin{align*}
    \Pr\bc{ X \geq y } \leq \frac{\lambda}{y} \frac{1}{\sqrt{2 \pi}} \exp\bc{- \frac{y^2}{2 \lambda^2}  }, \\
    \Pr\bc{ X \geq y } \geq  \bc{\frac{\lambda}{y} - \frac{\lambda^3}{y^3}} \frac{1}{\sqrt{2 \pi}} \exp\bc{- \frac{y^2}{2 \lambda^2}  }.
\end{align*}
Analogously,
\begin{align*}
    \Pr\bc{ X \leq -y } \leq \frac{\lambda}{y} \frac{1}{\sqrt{2 \pi}} \exp\bc{- \frac{y^2}{2 \lambda^2}  }, \\
    \Pr\bc{ X \leq -y } \geq  \bc{\frac{\lambda}{y} - \frac{\lambda^3}{y^3}} \frac{1}{\sqrt{2 \pi}} \exp\bc{- \frac{y^2}{2 \lambda^2}  }.
\end{align*}
\end{theorem}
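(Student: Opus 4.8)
The plan is to prove both inequalities by direct estimation of the Gaussian tail integral, after which the two-sided statements follow for free by symmetry. First I would reduce the claim to its analytic core. Since $X \sim \cN(0,\lambda^2)$ is symmetric about $0$, we have $\Pr\bc{X \leq -y} = \Pr\bc{X \geq y}$, so it suffices to establish the two bounds on $\Pr\bc{X \geq y}$ and the analogous pair is then immediate. Writing the tail as an integral,
\[ \Pr\bc{X \geq y} = \frac{1}{\lambda\sqrt{2\pi}} \int_y^\infty \exp\bc{-\frac{t^2}{2\lambda^2}} \dd t, \]
the whole problem becomes sandwiching $I(y) \coloneqq \int_y^\infty \exp\bc{-t^2/(2\lambda^2)}\,\dd t$ between $\bc{\frac{\lambda^2}{y} - \frac{\lambda^4}{y^3}}\exp\bc{-y^2/(2\lambda^2)}$ and $\frac{\lambda^2}{y}\exp\bc{-y^2/(2\lambda^2)}$.

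For the upper bound I would use the elementary observation that on the range of integration $t \geq y$, so $t/y \geq 1$. Multiplying the integrand by this factor only increases it, and the resulting integral is exactly solvable: $\int_y^\infty \frac{t}{y}\exp\bc{-t^2/(2\lambda^2)}\,\dd t = \frac{\lambda^2}{y}\exp\bc{-y^2/(2\lambda^2)}$, since $\frac{\dd}{\dd t}\exp\bc{-t^2/(2\lambda^2)} = -\frac{t}{\lambda^2}\exp\bc{-t^2/(2\lambda^2)}$. Dividing by $\lambda\sqrt{2\pi}$ yields the claimed upper bound $\frac{\lambda}{y}\frac{1}{\sqrt{2\pi}}\exp\bc{-y^2/(2\lambda^2)}$.

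The lower bound is the more delicate half and is where I expect the real work to sit. The idea is to exhibit an antiderivative that dominates the integrand from below. Setting $f(t) = \bc{\frac{\lambda^2}{t} - \frac{\lambda^4}{t^3}}\exp\bc{-t^2/(2\lambda^2)}$, a direct differentiation (using again $\frac{\dd}{\dd t}\exp\bc{-t^2/(2\lambda^2)} = -\frac{t}{\lambda^2}\exp\bc{-t^2/(2\lambda^2)}$) collapses the polynomial factors and gives $f'(t) = \bc{\frac{3\lambda^4}{t^4} - 1}\exp\bc{-t^2/(2\lambda^2)}$, hence $-f'(t) = \bc{1 - \frac{3\lambda^4}{t^4}}\exp\bc{-t^2/(2\lambda^2)} \leq \exp\bc{-t^2/(2\lambda^2)}$ for every $t > 0$. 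Integrating this pointwise inequality from $y$ to $\infty$ and using $f(t) \to 0$ as $t \to \infty$ gives $I(y) \geq \int_y^\infty (-f'(t))\,\dd t = f(y)$, which is exactly the desired lower bound after dividing by $\lambda\sqrt{2\pi}$.

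The only genuine obstacle is \emph{finding} the correct auxiliary function $f$ for the lower bound; once the ansatz $\frac{\lambda^2}{t} - \frac{\lambda^4}{t^3}$ is in hand the verification is a one-line differentiation. An equivalent route is repeated integration by parts of $I(y)$, writing $\exp\bc{-t^2/(2\lambda^2)} = -\frac{\lambda^2}{t}\cdot\bc{-\frac{t}{\lambda^2}\exp\bc{-t^2/(2\lambda^2)}}$ and integrating the second factor, which generates the asymptotic series $\frac{\lambda^2}{y} - \frac{\lambda^4}{y^3} + \cdots$; truncating after one and after two terms and tracking the sign of the remainder reproduces the upper and lower bounds respectively. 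Since the statement only needs these first two terms, no further terms of the expansion are required, and the whole argument is self-contained modulo standard calculus.
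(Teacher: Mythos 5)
Your proof is correct. Note, however, that the paper does not prove this statement at all: it is quoted in the appendix as a known fact, with the upper bound attributed to a standard Gaussian tail estimate and the lower bound to Mill's ratio, citing D\"umbgen and Feller (Section 7.1). What you have written is essentially a self-contained reconstruction of the classical argument in that cited reference: your upper bound via the factor $t/y \geq 1$ is the standard trick (Feller instead integrates the inequality $\phi(t) \leq (1+\lambda^2/t^2)\phi(t)$, whose right-hand side has the exact antiderivative $-\frac{\lambda^2}{t}\phi(t)$, but the two are interchangeable one-line arguments), and your lower bound via the auxiliary function $f(t) = \bc{\frac{\lambda^2}{t} - \frac{\lambda^4}{t^3}}\exp\bc{-t^2/(2\lambda^2)}$, the computation $-f'(t) = \bc{1 - \frac{3\lambda^4}{t^4}}\exp\bc{-t^2/(2\lambda^2)} \leq \exp\bc{-t^2/(2\lambda^2)}$, and integration over $[y,\infty)$ is precisely Feller's proof of Mill's ratio, rescaled from unit variance to variance $\lambda^2$. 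The reduction by symmetry and the limit $f(t) \to 0$ are both handled correctly, and the argument is valid for all $y > 0$ (for $y \leq \lambda$ the lower bound is nonpositive and hence trivial, but your derivation covers it uniformly). So there is no gap; you have simply supplied the proof the paper outsources, by the same route as the paper's source.
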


\end{document}